\journal{IFAC Journal of Systems and Control}
\newtheorem{theorem}{\bf Theorem}
\newtheorem{lemma}[theorem]{\bf Lemma}
\newcommand{\Na}{\mathbb{N}}
\begin{document}

\begin{frontmatter}

\title{Fleet Sizing and Charger Allocation in Electric Vehicle Sharing Systems}

%% or include affiliations in footnotes:
\author[mymainaddress]{Yuntian Deng\corref{mycorrespondingauthor}}\ead{deng.556@osu.edu}

\author[mymainaddress]{Abhishek Gupta}
\cortext[mycorrespondingauthor]{Corresponding author}
\ead{gupta.706@osu.edu}

\author[mymainaddress,mysecondaryaddress]{Ness B. Shroff}\ead{shroff.11@osu.edu}

\address[mymainaddress]{Department of Electrical and Computer Engineering, The Ohio State University, Columbus, OH 43210, USA}
\address[mysecondaryaddress]{Department of Computer Science and Engineering, The Ohio State University, Columbus, OH 43210, USA}

\begin{abstract}
In this paper, we propose a closed queueing network model for performance analysis of electric vehicle sharing systems with a certain number of chargers in each neighborhood. Depending on the demand distribution, we devise algorithms to compute the optimal fleet size and number of chargers required to maximize profit while maintaining a certain quality of service. We show that the profit is concave with respect to the fleet size and the number of chargers at each charging point. If more chargers are installed within the city, we show that it can not only reduce the fleet size, but it also improves the availability of vehicles at all the points within a city. We further show through simulation that two slow chargers may outperform one fast charger when the variance of charging time becomes relatively large in comparison to the mean charging time.
\end{abstract}

\begin{keyword}
Electric vehicles \sep Sharing economy in transportation \sep Closed queueing networks

\end{keyword}

\end{frontmatter}

%\linenumbers

\section{Introduction}
Many governments across the world are emphasizing the decarbonization of transportation to curb greenhouse gas emissions and pollution associated with transportation industry. With the rise in sharing economy within transportation sector, there is a shift from personally-owned modes of transportation to shared vehicles (using ride-hailing services, bikes, and scooters, etc.). These service providers are now commonly referred to as transportation network companies (TNCs). TNCs provide on-demand transportation services to the passengers, increases vehicle utilization, and enhances overall convenience to the passengers. Incorporating electric vehicle within TNCs is widely considered to be a solution to achieve long-term sustainable transportation objectives. 

Given the high annual miles traveled by vehicles in shared fleets, Pavlenko et. al. \cite{pavlenko2019does} estimates that the per mile operating cost of all battery electric or hybrid vehicle fleets is much lower than that of the conventional ones. Even without the current purchasing incentives, long range battery electric vehicles (BEVs) will become the most economically attractive technology for ride-hailing operations by 2023-2025 (assuming the cost of battery packs go down by 35\%). Another recent research suggests that the operating and ownership expenses of electric vehicles with high annual miles driven are significantly lower than those of conventional vehicles \cite{weldon2018long}. However, charging electric vehicles takes a non-trivial amount of time (depending on the state of charge of the vehicle); this was the primary . Moreover, the cost of installing and maintaining charging infrastructure is substantial (installation of a charging station could cost anywhere between \$ 10 - 50 thousand and reserving parking spots for electric vehicles could be costly in high population density areas). Consequently, for the adoption of electric vehicle sharing system, it is important to determine based on the demand distribution:
\begin{enumerate}[(1)]
	\item What should be the optimal fleet size, since a large fleet of EVs results in improved availability, reliability, and better quality of service, but it also costs more to maintain.
	\item What should be an optimal number of charging stations across the city. Again, a large number of charging stations would improve availability and quality of service, but again it has high recurring costs.
\end{enumerate}
We formulate these two problems as optimization problems in which the movement of the vehicles across a city is modeled using a closed queueing network model. We devise algorithms to solve these optimization problems. We also shed some light on the question of nature of charger (fast vs slow) needed for improving the quality of service. 

A significant amount of work has focused on the planning of EV charging infrastructure \cite{shen2019optimization} for personal EVs, such as flow-based models  \cite{hodgson1990flow} and network equilibrium models \cite{he2013optimal}. However, fleet sizing and planning of charging infrastructure for fleets of electric vehicles owned by TNCs have not received much attention.  

Boyaci et al. \cite{boyaci2015optimization} developed and solved a multi-objective mixed-integer programming model for planning one-way vehicle-sharing systems, taking into account vehicle relocation and charging requirements. An aggregate model using the concept of the virtual hub is solved with a branch-and-bound approach. In their model, they assumed that the returned vehicle has to stay in the station for a fixed time, which represents the charging period of the vehicle.

Weikl and Bogenberger \cite{weikl2015practice} introduced a relocation model for free-floating car-sharing systems with both conventional and electric vehicles. Both the refueling of conventional vehicles and the recharging of EVs are taken into account. He et al. \cite{he2017service} studied the trade-offs between maximizing passenger catchment by covering travel needs and controlling fleet operation costs. A distributionally robust optimization framework is developed, which informs robust decisions to overcome possible ambiguity of data, by solving a mixed-integer second-order cone program. 

Queueing network models have now been widely used in the analysis of fleet sizing and vehicle routing problems. George and Xia \cite{george2011fleet} proposed a queuing network based model for a vehicle rental system. Fanti et al. \cite{fanti2014fleet} considered three different types of electric vehicles in a closed queueing network model for EV sharing system: fully charged, partially charged and out of charge. These studies consider the problem of optimal fleet sizing with additional constraints on quality of service parameters such as availability and wait time for the passengers. Our model is largely inspired by these works. 

Zhang and Pavone \cite{zhang2016control} leveraged closed Jackson networks to model an on-demand transportation system, where the rebalancing techniques are applied to a system sizing example for three Manhattan neighborhoods. Zhang et al.  \cite{zhang2018analysis} used two coupled closed Jackson networks with passenger loss to model vehicle rebalancing. Two open-loop control approaches are proposed for system sizing. Iglesia et al. \cite{iglesias2019bcmp} cast an autonomous mobility-on-demand system within the framework of BCMP queueing networks, which can capture both congestion effects and vehicle charging. However, they assume that the charging infrastructure at each site is unlimited, which is not realistic. 

\subsection{Key Contributions of this paper}
The contributions of this paper are threefold. Firstly, we model various neighborhoods in the city as nodes and edges and model the electric vehicle sharing system as a closed BCMP queueing network. Through this modeling framework, we can incorporate stochastic passenger arrival, electric vehicle routing and charging, the general distribution of travel time, the effect of the number of chargers on availability, and the distribution of electric vehicles. Secondly, since different neighborhoods may have different requirements on vehicle availability and may have differing numbers of chargers, we propose optimization formulations to determine the optimal fleet size and charger allocation to maximize profit. Concavity of profit with respect to fleet size and the number of chargers is established. 
We further devise a greedy algorithm based heuristic to solve the charger allocation problem in an iterative fashion.  Thirdly, we show that under a fixed budget,  two slow chargers outperform one fast charger if the variance of charging time becomes relatively large. An approximation method is proposed for general passenger inter-arrival time. Finally, large-scale simulations validate our theorems and approaches.

\subsection{Outline of the paper}
The rest of the paper is organized as follows. The system model is presented in Section \ref{sec:model}. Section \ref{sec:queue} reviews some results on invariant distribution and throughput of closed queueing networks. The optimal fleet sizing problem is formulated and solved in Section \ref{sec:fleet}. In Section \ref{sec:charge}, we devise a greedy approach based heuristic algorithm for charger allocation based on marginal allocation. The comparison between one fast and two slow chargers is discussed in Section \ref{sec:select}. To reduce the computational complexity for large-scale simulations, the mean value analysis is reviewed in Section \ref{sec:algorithm}. We also discuss in this section an approximation approach for computing the stationary probability distribution of the system for the case of general passenger inter-arrival time distribution. Section \ref{sec:numeric} presents the numerical results for both optimization problems for a 60-node city network. Finally, Section \ref{sec:conclusion} concludes the paper and presents directions for future research.

\section{System Modeling}\label{sec:model}
In this section, we model an electric vehicle sharing system using the framework of Baskett, Chandy, Muntz, and Palacios (BCMP) closed queueing network \cite{baskett1975open}. Suppose there are $M \in \Na$ electric vehicles in one city, which can offer services to passengers and are routed around to serve stochastic demands at different places. Similar to \cite{george2011fleet,  iglesias2019bcmp}, from a virtual service view, we model this system as a closed network with respect to vehicles,  i.e., vehicles are routed within this network and receive 'service' at different nodes. The number of vehicles remains constant and there is no vehicle entering or leaving the network.

As shown in Figure \ref{fig:System}, we use three different queues to model three processes: departure, charging and travel. To be specific, each station is mapped into the combination of a single-server queue (SS) node (departure) and a finite-server queue (FS) node (charging), while each travel between stations is mapped into an infinite-serer queue (IS) node. In the following, we explain the system from view of passenger and electric vehicle, respectively.

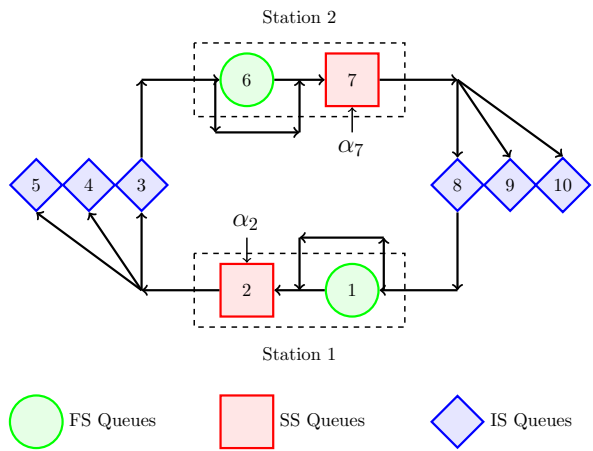
\begin{figure}[bth]
	\centering
	\scalebox{0.7}{\begin{tikzpicture}[
roundnode/.style={circle, draw=green!100, fill=green!10, very thick, minimum size=10mm},
squarednode/.style={rectangle, draw=red!100, fill=red!10, very thick, minimum size=10mm},
diamondnode/.style={diamond, draw=blue!100, fill=blue!10, very thick, minimum size=10mm},
]
%Nodes
\node[roundnode]        (fs2)   at (2,4) {6};
\node[roundnode]        (fs1)   at (4,0) {1};
\node[squarednode]      (ss1)   at (2,0)  {2};
\node[squarednode]      (ss2)   at (4,4)  {7};

\node[diamondnode]      (is1)   at (0,2) {3};
\node[diamondnode]      (is2)   at (-1,2) {4};
\node[diamondnode]      (is3)   at (-2,2) {5};
\node[diamondnode]      (is4)   at (6,2) {8};
\node[diamondnode]      (is5)   at (7,2) {9};
\node[diamondnode]      (is6)   at (8,2) {10};

%Lines
\draw[->][very thick] (ss1.west) -- (0,0);
\draw[->][very thick] (0,0) -- (is1.south);
\draw[->][very thick] (0,0) -- (is2.south);
\draw[->][very thick] (0,0) -- (is3.south);

\draw[->][very thick] (is1.north) -- (0,4);
\draw[->][very thick] (0,4) -- (fs2.west);
\draw[->][very thick] (fs2.east) -- (ss2.west);
\draw[->][very thick] (ss2.east) -- (6,4);

\draw[->][very thick] (6,4) -- (is4.north);
\draw[->][very thick] (6,4) -- (is5.north);
\draw[->][very thick] (6,4) -- (is6.north);

\draw[->][very thick] (is4.south) -- (6,0);
\draw[->][very thick] (6,0) -- (fs1.east);
\draw[->][very thick] (fs1.west) -- (ss1.east);

\draw[->][very thick] (1.4, 4) -- (1.4 ,3);
\draw[->][very thick] (1.4 ,3) -- (3,3);
\draw[->][very thick] (3,3) -- (3,4);

\draw[->][very thick] (4.6, 0) -- (4.6, 1);
\draw[->][very thick] (4.6, 1) -- (3,1);
\draw[->][very thick] (3,1) -- (3,0);

\node[right] at (1.6,1.3) (m) {\Large $\alpha_2$};
\draw[->][thick] (2,1) -- (ss1.north);
\node[right] at (3.6,2.7) (m) {\Large $\alpha_7$};
\draw[->][thick] (4,3) -- (ss2.south);

\node[roundnode] at (-2, -2.5)  {};
\node[right] at (-1.5,-2.5) (m) {FS Queues};

\node[squarednode] at (2,-2.5) {};
\node[right] at (2.5,-2.5) (m) {SS Queues};

\node[diamondnode] at (6,-2.5) {};
\node[right] at (6.5,-2.5) (m) {IS Queues};

\draw[dashed,thick] (1,0.7) rectangle (5,-0.7);
\node at (3,-1.2) (m) {Station 1};

\draw[dashed,thick] (1,4.7) rectangle (5,3.3);
\node at (3,5.2) (m) {Station 2};
\end{tikzpicture}}
	\caption{\label{fig:System} Electric Vehicle Sharing System, in which finite server queues denote charging stations, single server queues denote the passenger pickup stations, and infinite server queues represent road networks.}
\end{figure}    

\subsection{Passenger} 
Assume that there are several stations within a certain geographical area. Each station has a departure point (pick-up node) and a charging point (drop-off node). We denote the set of departure points as $S$ and the set of charging points as $F$. At each departure point $i \in S$, passengers arrive  according to a Poisson arrival with rate $\alpha_i > 0$. If there is at least one electric vehicle waiting at departure point $i$, the passenger will take the first electric vehicle in line and start traveling without any waiting time. 

We assume that there is a "passenger loss" if there is no electric vehicle at the departure point $i$ when passengers arrive, i.e., passengers immediately leave this system and try other transportation systems to finish their trips, rather than forming a queue at node $i$.

Before departing from point $i\in S$, each passenger will select his/her destination as charging point $j\in F$ with probability $p_{ij}$,  where $\sum_{j\in F}p_{ij}=1$ for each $i\in S$. 
According to $p_{ij}$, passengers will have different travel time and we model them as they enter different IS queues.  During the travel from node $i$ and to $j $, we assume that the travel time follows a general distribution with mean $T_{ij}$. %and variance $\sigma_{ij}$.

Passengers exit this system as soon as they arrive at the charging point $j$. For example, if a passenger wants to travel from station 1 to station 2, as shown in Figure \ref{fig:System}, he/she will depart from node $2$ and enter node $3$. Once the passenger leaves node $3$ and arrives at node $6$, then the one-way mobility-on-demand service is finished and the passenger exits this system.

\subsection{Electric Vehicles}
Electric vehicles are routed among three types of queues in the network according to probability $r_{ij}$ (defined later by \eqref{equ:r_ij}). We assume that the transfer from one queue to another is instant. From a virtual service view, electric vehicles form queues and receive services when they are waiting for passengers (SS), traveling between stations (IS), and charging at charging points (FS). Charging can also be skipped according to a certain probability, and in this case, electric vehicles directly go to the departure points after leaving the traveling nodes.

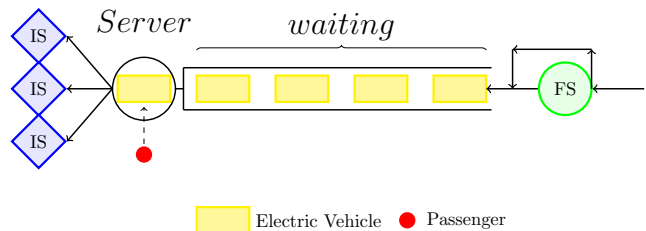
\begin{figure}[bth]
	\centering
	\scalebox{0.7}{\begin{tikzpicture}[
roundnode/.style={circle, draw=green!100, fill=green!10, very thick, minimum size=10mm},
squarednode/.style={rectangle, draw=red!100, fill=red!10, very thick, minimum size=10mm},
diamondnode/.style={diamond, draw=blue!100, fill=blue!10, very thick, minimum size=10mm},
]

%EV
\filldraw[color= yellow!50, draw=yellow, very thick] (-1.5,0) rectangle (-0.5,0.5);

\filldraw[color= yellow!50, draw=yellow, very thick] (0,0) rectangle (1,0.5);
\filldraw[color= yellow!50, draw=yellow, very thick] (1.5,0) rectangle (2.5,0.5);
\filldraw[color= yellow!50, draw=yellow, very thick] (3,0) rectangle (4,0.5);
\filldraw[color= yellow!50, draw=yellow, very thick] (4.5,0) rectangle (5.5,0.5);

%Server
\draw [thick] (-1,0.25) circle (17pt);
\draw [thick] (-0.25,0.25)-- (-0.4,0.25);

%Queue
\draw[thick] (-0.25,-0.15) -- (-0.25,0.65);
\draw[thick] (-0.25,-0.15) -- (5.6,-0.15);
\draw[thick] (-0.25,0.65) -- (5.6,0.65);

%Name
\draw [decoration={brace,mirror,raise=28pt},decorate]
     (5.5,0) --  node[above=30pt]  {\LARGE $waiting$} (0,0);
     
\node [above] at (-1,1.2) {\LARGE $Server$};

%customer
\filldraw[red,thick] (-1,-1) circle (4pt);
\draw[->][dashed] (-1,-0.8) -- (-1,-0.1);

%FS IS
\node[roundnode]        (fs1)   at (7,0.25) {FS};
\node[diamondnode]      (is1)   at (-3,0.25) {IS};
\node[diamondnode]      (is2)   at (-3,-0.75) {IS};
\node[diamondnode]      (is3)   at (-3,1.25) {IS};

\draw[->][thick] (-1.6,0.25) -- (is1.east);
\draw[->][thick] (-1.6,0.25) -- (is2.east);
\draw[->][thick] (-1.6,0.25) -- (is3.east);

\draw[->][thick] (fs1.west) -- (5.5,0.25);
\draw[->][thick] (8.5,0.25) -- (fs1.east);
\draw[->][thick] (7.5,0.25) -- (7.5,1);
\draw[->][thick] (7.5,1) -- (6,1);
\draw[->][thick] (6,1) -- (6,0.25);

%explain

\filldraw[color= yellow!50, draw=yellow, very thick] (0,-2.5) rectangle (1,-2);
\node[right] at (1,-2.25)  {Electric Vehicle};

\filldraw[red,thick] (4,-2.25) circle (4pt);
\node[right] at (4.25,-2.25)  {Passenger};

\end{tikzpicture}}
	\caption{\label{fig:SS} Single Server (SS) queue: electric vehicles queue up, waiting for incoming passengers. Passenger will always pick the first vehicle in line if there are some vehicles waiting. }
\end{figure}   

\subsubsection{Single Server (SS) queues - Departure}
At each departure point, vehicles queue up to wait for the arrival of the next passenger. If there is no vehicle at the departure point, then any passenger who arrives would leave immediately. We view node $i\in S$ as a First-Come-First-Serve (FCFS) single server (SS) queue, whose service rate is the passenger's arrival rate at station $i$: $\alpha_i$, i.e., the service time is exponentially distributed with mean $\frac{1}{\alpha_i}$, which is the same as the inter-arrival time of passengers at this point. As shown in Figure \ref{fig:SS}, if there are at least two electric vehicles at the departure node $i\in S$, one vehicle is being served in the server of the queue and others are waiting in the queue.
Once there is an arrival of a new passenger,  the vehicle in the server will finish its service and leave this node carrying one passenger, i.e., the arriving passenger will pick up the first vehicle in the line once arrival. When the server is idle, the second vehicle will enter the server and starts its service. This is similar to an airport taxi service when there is only one line. The first taxi is always on the server and it departs once a passenger arrives, the second taxi becomes the first one after the departure and the last-come taxi will stay at the end of the queue, which forms the FCFS discipline. 

\subsubsection{Infinite Server (IS) queues - Travel}
After departing from node $i \in S$, each vehicle will go towards its destination $j \in F$ selected by passenger with probability $p_{ij}$, where $\sum_{j} p_{ij}=1$, $p_{ii}=0$. We use an infinite-server (IS) queue connecting the origin and destination to model the travel time of passengers. We assume that the travel time is independent across all the passengers and follows a general distribution with mean $T_{ij}$, which is associated with the distance between departure point $i $ and charging point $j$. 

\subsubsection{Direct path - No Charging}
After the passenger is dropped off at the destination, the electric vehicle may need to be charged. Accordingly, we assume that with probability $\bar{p_j}$, the vehicle decides to be charged at node $j\in F$, and with probability $1-\bar{p_j}$, it goes directly to the following single server queue without waiting or charging.

\subsubsection{Finite server (FS) queues - Charging}\label{sub:fscharge}
If the vehicle decides to be charged, we use an FCFS finite server (FS) queue to model the charging process at charging point $j \in F$. As both chargers and spaces are limited at each station, we assume that the maximum number of vehicles charged simultaneously is $v_j$ at charging point $j$. All vehicles that decide to be charged at that node forms an FCFS queue to wait for charging. To simplify the analysis, we assume that the charging time follows an exponential distribution with mean $t_j$ for $j\in F$. After the charging process is over, charged electric vehicles will enter the following single server queue to wait for the passengers.

\subsection{Closed Queueing Network}
In this network, we have considered three types of nodes: Single server queue ($S$), Infinite server queue ($I$) and Finite server queue ($F$). Let $\mathcal{N} = S \cup I \cup F$ denote the set of all nodes, and denote $N=|\mathcal{N}|$. For each node $i\in \mathcal{N}$, let Parent$(i)$ be the direct origin of node $i$, i.e., as shown in the Figure \ref{fig:System}, Parent$(7)$ is node $6$ and  Parent$(2)$ is node $1$.

The routing matrix of vehicles between nodes can be written as follows: 
\begin{align}{\label{equ:r_ij}}
	r_{ij}= \left\{
	\begin{aligned}
		&p_{il},  \text{\space} i\in S, j\in I, l \in F, \\
		& \qquad \quad i=\mbox{Parent}(j), j= \mbox{Parent}(l)\\
		&\bar{p_j}, \text{\space} i\in I, j \in F, i=\mbox{Parent}(j) \\
		&1-\bar{p_k}, \text{\space} i\in I, k \in F, j \in S, \\
		& \qquad \quad i=\mbox{Parent}(k), k=\mbox{Parent}(j) \\
		&1, \text{\space} i\in F, j \in S, i=\mbox{Parent}(j) \\
		&0, \text{\space} \mbox{otherwise}
	\end{aligned}
	\right.
\end{align}
where the first case means that,  after selecting the destination $l \in F$ with probability $p_{il}$, passengers enter the associated roads and begin travelling at node $j\in I$ following departure from node $i \in S$. The second case indicates that vehicles will choose to charge at charging point $j \in F$ with probability $\bar{p_j}$ after exiting the roads $i \in I$. The third case denotes that vehicles will move directly to the departure point $j \in S$ after exiting the roads $i \in I$ with probability $1-\bar{p_k}$, which skips charging at $k\in F$. The fourth case indicates that all vehicles will move to the departure points $S$ if they finish the charging process within the same station.

When there are  $n_i \in \{0,\ldots,M\}$  vehicles at node $i \in \mathcal{N}$, the service rate at each node (the average number of vehicles finishing service and leaving this node per unit time) is as follows:
% \begin{align}{\label{equ:u_i}}
% 	u_{i}(n_i)= \left\{
% 	\begin{aligned}
% 		&\alpha_i,  n_i \geq 1, \text{\space} i \in S\\
% 		&\frac{n_i}{T_{jl}}, \text{\space} j\in S, i \in I, l\in F\\
% 		& \quad \quad j=\mbox{Parent}(i), i=\mbox{Parent}(l)\\
% 		%&\frac{n_i}{t_i}, \text{\space} i\in F, n_i < v_i\\
% 		&\frac{\min\{n_i,v_i\}}{t_i}, \text{\space} i\in F
% 	\end{aligned}
% 	\right.
% \end{align}
\begin{align}{\label{equ:u_i}}
	u_{i}(n_i)= 
	\begin{cases}
		\alpha_i &  n_i \geq 1, \text{\space} i \in S\\
		0 &  n_i =0, \text{\space} i \in S\\
		\frac{n_i}{T_{jl}} & \text{\space} j\in S, i \in I, l\in F, j=\mbox{Parent}(i), i=\mbox{Parent}(l)\\
		\frac{\min\{n_i,v_i\}}{t_i} & \text{\space} i\in F
	\end{cases}
\end{align}
where the first case means that, if there is at least one vehicles at station, the departure process only depends on the the arrival of passengers: $\alpha_i$ and does not depend on the number of vehicles: $n_i$ at this node.  The second case indicates that all travel times are independent from each other and $T_{jl}$ is the mean travel time from departure point $j\in S$ and charging point $l\in F$. The third case means that if the number of vehicles willing to be charged $n_i$ is larger than the number of chargers $v_i$ at charging point $i \in F$, they need to form a FCFS queue to wait until a vehicle finishes charging and the charger becomes available, while $t_j$ is the average charging time at charging point $j\in F$.

\section{Closed Queueing Network Analysis}\label{sec:queue}

In this section, we introduce some results in a closed queueing network and in particular, the BCMP network. In our model from the last section, SS queues and FS queues fall into the type-I queues and IS queues belong to the type-III queue in BCMP network \cite{gelenbe1998introduction}. Therefore, this model falls into the class of closed BCMP network, which has the product-form solution to the stationary state distribution, because these queues are quasi-reversible \cite{balsamo2000product}.

Given the fleet size $M$, i.e. there are $M$ electric vehicles routing within the network  and no electric vehicle enters or leaves the system, the associated continuous-time Markov process has the following space 
\begin{align*}
	\mathcal{S}=\Bigg\{(n_1,n_2,\ldots,n_N):\sum_{i=1}^{N} n_i=M, n_i \in \Na\Bigg\},
\end{align*}
where $n_i$ is the number of vehicles at node $i\in \mathcal{N}$. Since  the transition from one node to another is instant in our model, every vehicle must be at one node $i \in \mathcal{N}$.

Let $\lambda=(\lambda_1,\ldots,\lambda_N)$ denote the relative throughput at node $i \in \mathcal{N}$, which is defined as the relative average number of vehicles passing through the node per unit time. Since there are a fixed number $M$ of vehicles routing among the nodes, we have the following constraint (global balance equations):
\begin{align}{\label{equ:lambda}}
\lambda_i \sum_{k \in \mathcal{N}} r_{ik}=\sum_{j\in \mathcal{N}} \lambda_j r_{ji}, \text{\space} \forall i\in \mathcal{N},
\end{align}
where the probability of vehicle routing from node $i$ to node $j$ is $r_{ij}$ in \eqref{equ:r_ij}. With another constraint $\sum_{i \in \mathcal{N}} \lambda_i =1$, we can find the unique solution to  \eqref{equ:lambda} with respect to $\lambda$, which is also called visit ratio \cite[p.53]{balsamo2007queueing}.

%Note that $\lambda$ is not unique and has its value relatively ($c \lambda$ is also the solution to equation above where $c$ is a constant). We define $\sum_{i \in \mathcal{N}} \lambda_i =1$ to simplify the expression and now $\lambda$ is unique, which  is also called visit ratio.

It now follows from \cite{baskett1975open} that the stationary probability distribution of the resulting continuous time Markov process $P(n_1,n_2,\ldots,n_N)$ has the following product form :
\begin{align}{\label{equ:p_n}}
	&P(n_1,\ldots,n_N)=\frac{1}{G(M)}\prod_{i=1}^N \frac{\lambda_i^{n_i}}{\prod_{k=1}^{n_i} u_i(k)},
\end{align}
where $G(M)$ is the normalization constant in order to make its summation equal to one. A computational method to compute $G(M)$ is discussed in Subsection \ref{sub:convolution}.

From operational perspective, throughput and availability are the key performance indicator for the overall system. Throughput captures how many passengers are served per unit time. Availability is defined as the probability that at least one vehicle is available at the departure point. Due to the product form of the invariant distribution, these two quantities can be computed in closed form, and the expressions are given in the following lemma.

\begin{lemma}{\label{lemma:BCMP}}
	In a closed BCMP queueing network with $M$ vehicles and $N$ nodes, the throughput and availability are as follows
	\begin{enumerate}
		\item  The throughput of each node  $i\in \mathcal{N}$ (the average number of vehicles passing through node $i$ per unit time) is 
		\begin{align}{\label{equ:Lambda2}}
			\Lambda_i(M)=\lambda_i \Lambda(M),
		\end{align}
		where the system throughput of the network is 
		\begin{align}{\label{equ:Lambda1}}
			\Lambda(M)=\frac{G(M-1)}{G(M)}.
		\end{align}
		
		\item  The availability at departure points $S$, i.e., the probability that node $i \in S$ has at least one vehicle is
		\begin{align}{\label{equ:avail}}
			A_i(M) = P\{n_i\geq 1\} = 1-P\{n_i = 0\} =\frac{\lambda_i}{\alpha_i}  \Lambda(M).
		\end{align}
		
	\end{enumerate}
\end{lemma}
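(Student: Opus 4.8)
The plan is to obtain both identities directly from the product-form invariant distribution \eqref{equ:p_n}, using the fact that the throughput of a node is the stationary expectation of its instantaneous service rate.

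For part (1), I would start from
\[
\Lambda_i(M)=\sum_{(n_1,\ldots,n_N)\in\mathcal{S}} P(n_1,\ldots,n_N)\,u_i(n_i)
\]
and substitute \eqref{equ:p_n}. The crucial algebraic step is the telescoping identity, valid for every $n_i\geq 1$ and for each of the four node types,
\[
u_i(n_i)\,\frac{\lambda_i^{\,n_i}}{\prod_{k=1}^{n_i}u_i(k)} \;=\; \lambda_i\,\frac{\lambda_i^{\,n_i-1}}{\prod_{k=1}^{n_i-1}u_i(k)},
\]
together with the observation that the $n_i=0$ contribution drops out because $u_i(0)=0$ in every case of \eqref{equ:u_i}. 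Plugging this in and reindexing the occupancy at node $i$ via $m_i:=n_i-1$ turns the sum over $\mathcal{S}$ into a sum over configurations with total population $M-1$, which is exactly the definition of $G(M-1)$. Factoring out $\lambda_i/G(M)$ then yields $\Lambda_i(M)=\lambda_i\,G(M-1)/G(M)$, and setting $\Lambda(M):=G(M-1)/G(M)$ gives \eqref{equ:Lambda2} and \eqref{equ:Lambda1} simultaneously. This is the classical convolution-algorithm argument specialized to the present network.

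For part (2), I would specialize the throughput computation to a departure node $i\in S$. By \eqref{equ:u_i} its service rate equals the constant $\alpha_i$ whenever a vehicle is present and $0$ otherwise, so $u_i(n_i)=\alpha_i\,\mathbf{1}\{n_i\geq 1\}$ and therefore
\[
\Lambda_i(M)=\sum_{(n_1,\ldots,n_N)\in\mathcal{S}} P(n_1,\ldots,n_N)\,\alpha_i\,\mathbf{1}\{n_i\geq 1\}=\alpha_i\,P\{n_i\geq 1\}=\alpha_i A_i(M).
\]
Combining with part (1) gives $A_i(M)=\Lambda_i(M)/\alpha_i=(\lambda_i/\alpha_i)\Lambda(M)$, and $P\{n_i\geq 1\}=1-P\{n_i=0\}$ is immediate, which establishes \eqref{equ:avail}.

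The one step I expect to require genuine care is the reindexing in part (1): one has to verify that after replacing $n_i$ by $n_i-1$ the resulting summand is term-for-term the summand of $G(M-1)$, and that the index set transforms bijectively from $\{(n_1,\ldots,n_N):\sum_j n_j=M,\ n_i\geq 1\}$ onto $\{(n_1,\ldots,n_N):\sum_j n_j=M-1\}$ with no states lost or double counted. Since $\mathcal{S}$ is finite there are no convergence or interchange-of-summation issues, so once this bookkeeping is settled the remainder is routine. As a consistency check one may note that summing \eqref{equ:Lambda2} over $i$ and using $\sum_{i}\lambda_i=1$ returns $\Lambda(M)=\sum_i\Lambda_i(M)$, consistent with reading $\Lambda(M)$ as the aggregate flow rate of the closed network.
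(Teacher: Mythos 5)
Your proof is correct, but it takes a genuinely different route from the paper's. The paper does not derive part (1) at all: it simply cites Serfozo for the identities $\Lambda(M)=G(M-1)/G(M)$ and $\Lambda_i(M)=\lambda_i\Lambda(M)$. For part (2) it invokes the marginal-distribution formula $p_i(n_i)=\frac{\lambda_i^{n_i}}{\prod_{k=1}^{n_i}u_i(k)}\frac{G_i(M-n_i)}{G(M)}$ from Lavenberg, specializes it to single-server nodes to get $p_i(0)=\bigl[G(M)-\gamma_i G(M-1)\bigr]/G(M)$ with $\gamma_i=\lambda_i/\alpha_i$, and reads off $A_i(M)=1-p_i(0)=\gamma_i\Lambda(M)$. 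You instead prove part (1) from first principles by the Buzen-style telescoping and reindexing of the product form, and then obtain part (2) as a corollary of part (1) via the observation that $u_i(n_i)=\alpha_i\mathbf{1}\{n_i\ge 1\}$ at a departure node, so $\Lambda_i(M)=\alpha_i A_i(M)$. Both arguments are sound; yours is more self-contained and avoids the complementary-network constant $G_i(\cdot)$ entirely, while the paper's route produces the full marginal distribution \eqref{equ:marginal} as a byproduct, which it reuses later in the convolution section. The one point you should make explicit is the starting identity $\Lambda_i(M)=\sum_{n}P(n)\,u_i(n_i)$: for the FCFS exponential nodes this is the stationary expected departure rate and is immediate, but for the infinite-server nodes with general travel-time distributions the aggregated state $(n_1,\ldots,n_N)$ is not Markov, so reading $u_i(n_i)$ as the conditional departure rate requires the BCMP/insensitivity machinery (Lemma \ref{lemma:insen}) or a rate-conservation argument; this is a standard fact but worth a sentence, and it is the only place where your derivation leans on something beyond the displayed product form.
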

\begin{proof}{}
	We refer the reader to Appendix A.
\end{proof}

% We now discuss the passenger loss at each station. As defined above, $A_i(M)$ denotes the availability of vehicles at departure point $i \in S$. If there is at least one vehicle at departure point $i$, we claim that the status is available. Therefore, the availability is defined as the stationary state probability that node $i$ has at least one vehicle \cite{george2011fleet}. 

In \eqref{equ:avail}, with a high vehicle arrival rate $\lambda_i$ (high supply) and a low passenger arrival rate $\alpha_i$ (low demand) at departure point $i$, the service availability $A_i(M)$ will be relatively high. As defined before, we assume that there is a 'passenger loss' if there is no electric vehicle at departure point when a passenger arrives, i.e., the passenger leaves this system and try other modes of transportation. We show below that the probability of loss of a passenger at any time is related to the notion of availability introduced above.

\begin{lemma}{\label{lemma:loss}}
	If the fleet size is $M$, then the probability that there is a ``passenger loss" at departure point $i\in S$ is $1-A_i(M)$, where $A_i(M)$ is defined in equation \eqref{equ:avail}. 
\end{lemma}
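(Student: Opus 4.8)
The plan is to identify the passenger-loss probability with a long-run fraction of arriving passengers and then evaluate that fraction by a flow-conservation argument at the single-server node $i\in S$, using the throughput and availability formulas of Lemma~\ref{lemma:BCMP}.

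First I would fix a departure point $i\in S$ and pin down the bookkeeping of the SS queue: a vehicle leaves node $i$ at some epoch if and only if a passenger arrives at that epoch and finds $n_i\ge 1$, and each such departing vehicle carries exactly one passenger; conversely, every passenger that is \emph{not} lost triggers exactly one vehicle departure from $i$. Hence the long-run rate at which passengers are successfully served at $i$ equals the throughput of node $i$, which by \eqref{equ:Lambda2} is $\Lambda_i(M)=\lambda_i\Lambda(M)$.

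Second, since passengers arrive at $i$ according to a Poisson process of rate $\alpha_i$, the long-run arrival rate at $i$ is $\alpha_i$. Decomposing arrivals into ``served'' and ``lost'', the long-run rate of lost passengers is $\alpha_i-\lambda_i\Lambda(M)$ (this is nonnegative because $A_i(M)=\tfrac{\lambda_i}{\alpha_i}\Lambda(M)\le 1$), so the probability that a generic arriving passenger is lost is
\begin{align*}
\frac{\alpha_i-\lambda_i\Lambda(M)}{\alpha_i}=1-\frac{\lambda_i}{\alpha_i}\Lambda(M)=1-A_i(M),
\end{align*}
by the expression \eqref{equ:avail}. Equivalently, one may invoke PASTA: because the Poisson stream of passenger arrivals at $i$ does not anticipate the future evolution of the network, an arriving passenger sees the stationary distribution, and a loss occurs precisely when $n_i=0$, so the loss probability is $P\{n_i=0\}=1-A_i(M)$ directly.

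The step requiring the most care is the first one: establishing the exact one-to-one correspondence between non-lost passengers and vehicle departures from node $i$, and confirming that the ``service completions'' of the SS queue in the BCMP formulation are exactly these pickup events. Once that correspondence and the bound $\lambda_i\Lambda(M)\le\alpha_i$ are in hand, the remaining algebra is immediate, and the PASTA-based derivation provides an independent check yielding the same value.
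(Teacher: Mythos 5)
Your proposal is correct, and your secondary argument (the last sentence of your second step) is precisely the paper's entire proof: the authors simply invoke PASTA to equate the probability that an arriving passenger finds $n_i=0$ with the time-stationary probability $P\{n_i=0\}=1-A_i(M)$. Your primary route is genuinely different: you compute the long-run rate of served passengers as the node-$i$ throughput $\Lambda_i(M)=\lambda_i\Lambda(M)$ (using the one-to-one correspondence between non-lost passengers and vehicle departures from the SS node), subtract from the Poisson arrival rate $\alpha_i$, and read off the loss fraction $1-\tfrac{\lambda_i}{\alpha_i}\Lambda(M)=1-A_i(M)$. This flow-conservation argument buys you something the paper's proof does not: it identifies the absolute loss \emph{rate} $\alpha_i-\lambda_i\Lambda(M)$ explicitly (which is exactly the quantity the penalty term $\beta_k\alpha_k(1-A_k(V))$ in Section~\ref{sec:charge} relies on), and the fraction-of-arrivals-lost computation does not itself require the Poisson assumption, only ergodicity and the throughput identity. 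What it costs is a little care in the final identification of ``long-run fraction of lost arrivals'' with ``probability that an arriving passenger is lost,'' which for Poisson arrivals is most cleanly justified by PASTA anyway --- so in the end the two routes lean on the same property, and your explicit statement of the first step (service completions of the SS queue are exactly pickup events) is the right place to put the care. No gap.
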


\begin{proof}{}
	From Poisson arrival see time average (PASTA) \cite{wolff1982poisson}, the probability of the state as seen by an outside random observer is equal to the probability of the state seen by an arriving passenger under Poisson arrival. Since $1-A_i(M)$ is the probability that an outside random observer see that there is no vehicle at departure point $i \in S$, it is also the probability that an arriving passenger following the Poisson process finds no vehicle at departure point $i$, which leads to a passenger loss.
\end{proof}{}

In the following, we discuss the insensitivity property in the product form networks \cite[Section 3.4]{balsamo2000product}, which shows that the stationary distribution of the network $P(n_1,n_2,...,n_N)$ does not depend on the variance of service time distribution at infinite server nodes (IS), i.e., if the variance of travel time distribution between stations is changed, the average  performance metrics (throughput,  average waiting time,  average queue length) remains the same.

\begin{lemma}{\label{lemma:insen}}
    Consider a closed BCMP network introduced above. Let $\xi = (\xi_i)_{i\in I}$ denote the travel time distribution of the vehicles in the infinite server node $i\in I$, and let $\bar t_{\xi_i}$ be the mean of the distribution $\xi_i$. Let $P_\xi(n_1,n_2,\ldots,n_N)$ denote the corresponding stationary distribution. If $\xi'$ is another travel time distribution such that $\bar t_{\xi_i'} = \bar t_{\xi_i}$, then $P_{\xi} = P_{\xi'}$.
\end{lemma}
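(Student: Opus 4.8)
The plan is to show that the product-form distribution \eqref{equ:p_n} is the stationary distribution of the network whenever the travel-time law at each infinite-server node $i\in I$ is an arbitrary distribution with the prescribed mean. Since the right-hand side of \eqref{equ:p_n} involves the IS service rates only through $u_i(k)=k/T_{jl}$ by \eqref{equ:u_i}, and this quantity depends on $\xi_i$ solely via its mean $\bar t_{\xi_i}$, the claim $P_\xi=P_{\xi'}$ follows immediately once this is established. Thus the entire content of the lemma is to remove the exponential assumption on the IS service times.

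First I would treat the case in which each $\xi_i$ is a Coxian (mixed generalized Erlang) distribution. A vehicle receiving ``service'' at IS node $i$ then passes through a finite chain of exponential phases, and I would enlarge the state to record, for each such vehicle, which phase it currently occupies. The enlarged process is again a continuous-time Markov chain, and, crucially, an infinite-server station with Coxian service is precisely a type-3 station in the BCMP classification, so the enlarged network is still a BCMP network and therefore admits a product-form stationary distribution. Summing that product form over all phase configurations consistent with a given occupancy vector $(n_1,\dots,n_N)$ collapses, by a direct computation with the Erlang/Coxian normalising constants, to exactly \eqref{equ:p_n} with $T_{jl}$ equal to the mean of the corresponding Coxian law. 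Hence for Coxian travel-time distributions the stationary occupancy distribution depends only on the means $\bar t_{\xi_i}$.

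To pass from Coxian laws to an arbitrary distribution $\xi_i$ on $[0,\infty)$ with mean $\bar t_{\xi_i}$, I would use the fact that the Coxian distributions are dense, in the weak topology, in the set of all probability distributions on $[0,\infty)$, together with the fact that one can choose an approximating sequence whose means are all equal to $\bar t_{\xi_i}$. Because the fleet size $M$ and the node set $\mathcal{N}$ are finite, the state space $\mathcal{S}$ is finite, so the stationary distribution is a continuous function of the underlying transition structure; approximating $\xi$ and $\xi'$ by such Coxian sequences and passing to the limit shows that \eqref{equ:p_n} --- which for each approximant depends only on the common, fixed mean --- is the stationary distribution for both $\xi$ and $\xi'$, whence $P_\xi=P_{\xi'}$.

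The step I expect to be the main obstacle is this final limiting argument: weak convergence of the service-time laws does not by itself control their means, so care is needed either to construct the Coxian approximants with exactly the target mean, or to supply a uniform-integrability argument, and one must verify that the stationary distribution genuinely varies continuously under the relevant mode of approximation, for instance by working at the level of the embedded jump chain or a supplementary-variable description rather than a bare generator. An alternative that sidesteps the limit entirely is to invoke the general insensitivity theorem for BCMP-type networks for type-3 stations and simply observe that the IS nodes here are stations of exactly that type; I would present the Coxian-plus-density argument in the main text and relegate this remark to a footnote.
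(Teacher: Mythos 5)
Your proposal is essentially correct but follows a genuinely different route from the paper. The paper's entire proof is a citation: it invokes the station-balance results of Chandy, Howard, and Towsley (their Corollary 4.1 and Theorem 6), observing that an infinite-server node satisfies station balance because every vehicle begins service immediately upon arrival with zero wait, and insensitivity of the stationary occupancy distribution to everything but the mean service time is a direct consequence of that framework. What you propose instead is the classical method of stages: establish the product form \eqref{equ:p_n} for Coxian travel times by phase expansion and marginalization, then extend to general distributions by weak density of the Coxian class with means held fixed. That two-step argument is exactly the one in the original BCMP paper for type-3 stations, and your phase-expansion computation is sound; the payoff of your route is that it is self-contained and exhibits concretely why only the mean survives the marginalization, whereas the paper's route buys brevity and avoids the one genuinely delicate point you correctly flag --- the passage to the limit. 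That limiting step is not a formality: for a non-phase-type law the occupancy process is not Markov without supplementary variables, and continuity of the stationary distribution under weak convergence of the service laws must be argued (this is precisely the gap in the original BCMP treatment that Barbour's 1976 work and the station-balance literature were written to close). So if you were to write your proof out in full, you would either need to supply that continuity argument carefully or, as you suggest in your closing remark, fall back on citing the general insensitivity theorem for type-3 stations --- which is, in effect, what the paper does in one line.
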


\begin{proof}{}
    The result follows from \cite[Corollary 4.1 \& Theorem 6]{chandy1977product} via station balance. This property holds because units receive service immediately upon entering the queue and their wait times are zero.
\end{proof}

A consequence of the above result is that the stationary distribution is dependent only on the mean of the service time distribution for infinite server queues; the precise distribution does not matter. As the average performance metrics (throughput, average waiting time, average queue length) can be derived from stationary distribution $P$, they are also independent of service time distribution for infinite server queues.

Furthermore, we can extend the insensitivity property to finite server queues (FS) if the number of vehicles $M$ is less than or equal to the number of chargers at finite server queues $v_i$. Intuitively, when the condition $M \leq v_i$ holds, the queue $i$ behaves as same as an infinite server queue, therefore the stationary state distribution only depends on the mean service time at finite server queue $i$.

\begin{lemma}{\label{lemma:finite_insen}}
	In closed BCMP network, if $M \leq v_i $, i.e., the number of vehicles in the network is equal to or less than the number of charger at node $i \in F$, then the  stationary state distribution $P(n_1,n_2,\ldots,n_N)$ depends  on  the  service  time  distribution at finite server queue $i$  only  through  its mean.
\end{lemma}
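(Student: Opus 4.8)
The plan is to show that the hypothesis $M \le v_i$ forces the finite-server node $i$ to behave, on every state the network can actually occupy, exactly like an infinite-server node, and then to quote the insensitivity result already established in Lemma~\ref{lemma:insen}.

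First I would invoke the conservation constraint defining $\mathcal{S}$: for every reachable state $(n_1,\ldots,n_N)$ one has $n_i \le \sum_{j=1}^{N} n_j = M \le v_i$, so $\min\{n_i,v_i\} = n_i$ everywhere on $\mathcal{S}$. By the last case of \eqref{equ:u_i}, the service rate at node $i$ therefore equals $n_i/t_i$ throughout $\mathcal{S}$, which is precisely the service-rate function of a type-III (infinite-server) BCMP node with mean service time $t_i$. Equivalently, whenever a vehicle is routed into node $i$ there are at most $M-1 \le v_i-1$ vehicles already present, so the newcomer immediately seizes an idle charger and its waiting time in queue $i$ is zero. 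Hence node $i$, restricted to the reachable states, is an infinite-server queue, and the product form \eqref{equ:p_n} for the network is unaffected, since the remaining finite-server nodes, even if they saturate, remain BCMP FCFS type-I nodes and keep their quasi-reversibility.

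Having identified node $i$ as an infinite-server node, I would apply Lemma~\ref{lemma:insen}: its proof rests only on station balance at the node together with the fact that arriving units are served at once and incur zero time in queue, which is exactly the structure that $M \le v_i$ supplies at node $i$. It follows that if $\xi_i$ and $\xi_i'$ are two charging-time distributions at node $i$ with the common mean $t_i$, all other service distributions held fixed, then the two stationary distributions coincide; that is, $P(n_1,\ldots,n_N)$ depends on the charging-time distribution at node $i$ only through $t_i$, which is the assertion of the lemma.

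The step I expect to be the main obstacle is the transfer of Lemma~\ref{lemma:insen}, which is phrased for the designated infinite-server nodes $I$, to the node $i\in F$. The resolution is that insensitivity is a statement about a node's local dynamics --- station balance and zero time-in-queue --- rather than about how the node was classified when the model was set up; under $M \le v_i$ node $i$ possesses precisely that local structure, so either Lemma~\ref{lemma:insen} applies verbatim with $i$ adjoined to the index set of insensitive nodes, or, if one prefers, its proof via \cite{chandy1977product} and station balance is rerun with node $i$ included. A minor bookkeeping check is that recasting node $i$ leaves the product form for the rest of the network intact, which holds because no other node is modified.
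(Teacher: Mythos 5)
Your proposal is correct and follows essentially the same route as the paper: the paper's proof simply cites the BCMP infinite-server condition after noting (in the text preceding the lemma) that under $M\le v_i$ node $i$ behaves as an infinite-server queue, which is exactly the observation you make explicit via $\min\{n_i,v_i\}=n_i$ on all reachable states and the zero-wait property. Your version is somewhat more self-contained, since you verify the reduction to an IS node and then route the conclusion through Lemma~\ref{lemma:insen} rather than citing the BCMP conditions directly, but the underlying argument is the same.
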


\begin{proof}{}
	This lemma follows from \cite[p.250 Condition 3]{baskett1975open}  and its Section 4.1, where only the mean service times appear in $P(n_1,n_2,\ldots,n_N)$ and any service time distribution with the same mean yields the same results as exponential service time distribution.
\end{proof}

\section{Optimal Fleet Sizing}\label{sec:fleet}

From the view of a  mobility-on-demand service provider, one critical variable is the size of an electric fleet, before launching service in one city. In this section, we want to develop a profit maximization problem with operating cost, by controlling the fleet size, while maintaining a certain quality of service. The problem in \cite{george2011fleet} is extended to more general cases with any travel time distribution, a finite number of chargers, convex operating cost function and location-specific availability requirements.   

As service providers can only make money when vehicles are traveling, we model its total revenue per unit time as $\sum_{i\in I} \Lambda_i(M) z_i $, where $z_i$ is the revenue per-service (one-way charge) when vehicles are in IS nodes $i \in I$.  $\Lambda_i(M)$ is the throughput of node $i$, i.e., the average number of service finished at node $i$ per unit time.  Besides, we define the operating cost (salary, maintenance, etc) per unit time as a convex (including linear) increasing function $g(M)$ with respect to the fleet size $M$. 

As there are various requirements of availability at different places (e.g., high availability at airports and downtown), we define $\epsilon=(\epsilon_1, \ldots, \epsilon_s)$ as the quality of service requirement in the system. At each departure point $i \in S$, the availability $A_i(M)$ defined in  \eqref{equ:avail} is greater than or equal to $1-\epsilon_i$. 

From a steady-state view of the system, we want to maximize the profit by controlling the fleet size $M$, while maintaining a certain quality of service $A_i(M)$. The optimization problem can be formulated as follows:
\begin{align}
	&\max_{M \in \Na} f(M) = \sum_{i\in I} z_i\Lambda_i(M) -g(M) \\
	&s.t. \text{\space}  A_i(M) \geq 1-\epsilon_i, \text{\space} \forall i \in S {\label{equ:constraint}}
\end{align}
In the next two lemmas, we show that the objective function $f$ is concave and that the above optimization problem is feasible. For a function $g:\Na\rightarrow\Re$ defined over the space of natural numbers, it is said to be concave \cite{shanthikumar1988second} if 
\begin{align*}
    f(M)+f(M+2) \leq 2 f(M+1) \text{ for all } M \in \Na.
\end{align*}

\begin{lemma}{\label{lemma:concave}}
The objective function $f:\Na\rightarrow\mathbb{R}$ is concave in $M$. 
\end{lemma}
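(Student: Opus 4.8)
The plan is to reduce the concavity of $f(M) = \sum_{i\in I} z_i \Lambda_i(M) - g(M)$ to the concavity of the system throughput $\Lambda(M)$, and then to establish the latter by a known monotonicity/convexity property of the normalization constants $G(M)$ in closed product-form (BCMP) networks. Recall from \eqref{equ:Lambda2} that $\Lambda_i(M) = \lambda_i \Lambda(M)$, where the visit ratios $\lambda_i$ do not depend on $M$. Hence $\sum_{i\in I} z_i \Lambda_i(M) = \big(\sum_{i\in I} z_i \lambda_i\big)\Lambda(M) = c\,\Lambda(M)$ with $c \geq 0$ a constant. Since $g$ is convex by assumption, $-g$ is concave; and a nonnegative multiple of a concave function is concave, and sums of concave functions (in the discrete sense $h(M)+h(M+2)\le 2h(M+1)$) are concave. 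So it suffices to show that $M \mapsto \Lambda(M) = G(M-1)/G(M)$ is concave on $\Na$.

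The key step is therefore to prove that the throughput $\Lambda(M) = G(M-1)/G(M)$ is nondecreasing and concave in $M$. This is a classical result for closed product-form networks; I would invoke the fact (due to Shanthikumar--Yao type arguments, or the log-concavity of the sequence $\{G(M)\}$ established via the convolution algorithm recursion $G(M) = \sum_{i} \cdots$) that the sequence $G(M)$ is log-concave, i.e. $G(M-1)G(M+1) \le G(M)^2$, which is exactly the statement that $\Lambda(M+1) = G(M)/G(M+1) \ge G(M-1)/G(M) = \Lambda(M)$, giving monotonicity. Concavity of $\Lambda$ is a strengthening: one shows $\Lambda(M) + \Lambda(M+2) \le 2\Lambda(M+1)$, i.e. $\frac{G(M-1)}{G(M)} + \frac{G(M+1)}{G(M+2)} \le 2\frac{G(M)}{G(M+1)}$. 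I would derive this from structural properties of the $G(M)$ recursion in the BCMP setting — for the node types present here (single-server type-I queues with load-independent rates $\alpha_i$, infinite-server type-III queues, and finite-server queues), the per-node "service function" $\prod_{k=1}^{n} u_i(k)$ behaves well enough that the generating-function argument for log-concavity and throughput concavity goes through. Alternatively, one can cite the monotonicity and concavity of throughput in the number of jobs for closed networks of quasi-reversible queues directly.

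The main obstacle is the concavity of $\Lambda$ itself — monotonicity of throughput in $M$ is standard and easy, but second-order concavity of $G(M-1)/G(M)$ requires more care, since it is not a purely algebraic consequence of log-concavity of $\{G(M)\}$ alone; one genuinely needs to use that the network is of product form with the specific mixed node types (and in particular that the single-server departure nodes have load-independent service rates, while the IS nodes have rates linear in occupancy). I expect the cleanest route is to combine: (i) the convolution recursion for $G$, (ii) a known total-positivity / log-concavity property of the sequences built from the node service functions, and (iii) an induction on $M$. If a direct reference covering exactly this node mix is available, invoking it is preferable to re-deriving the inequality by hand; otherwise I would carry out the induction, handling the single-server and infinite-server/finite-server contributions to the $G$-recursion separately, with the infinite-server term being the one that needs the most attention because of its occupancy-dependent rates.
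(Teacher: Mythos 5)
Your reduction of the problem to the concavity of the system throughput $\Lambda(M)$ is exactly the paper's first and last steps: since $\Lambda_i(M)=\lambda_i\Lambda(M)$ with $\lambda_i$ independent of $M$, the revenue term is a nonnegative constant times $\Lambda(M)$, and convexity of $g$ handles the cost term. The gap is that you never actually establish the crux, namely that $M\mapsto\Lambda(M)=G(M-1)/G(M)$ is (nondecreasing) concave. You correctly observe that log-concavity of the sequence $G(M)$ only yields monotonicity, not the second-order inequality $\Lambda(M)+\Lambda(M+2)\le 2\Lambda(M+1)$, and you then describe an induction on the convolution recursion that you do not carry out. As written, the proof is a plan for the hard step rather than a proof of it.

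The paper closes this gap by invoking a specific known result (Shanthikumar and Yao, \cite[Theorem 1]{shanthikumar1988second}): in a closed Jackson network, if every node's service rate $u_i(n_i)$ is nondecreasing and concave in its local queue length $n_i$, then the throughput $\Lambda(M)$ is nondecreasing and concave in the population $M$. The hypothesis is then verified directly from \eqref{equ:u_i}: the rate is constant in $n_i$ for single-server departure nodes, linear for infinite-server travel nodes, and $\min\{n_i,v_i\}/t_i$ for finite-server charging nodes --- all nondecreasing concave. This is the concrete structural condition your sketch was searching for, and it is what makes the "mixed node types" unproblematic. Finally, since that theorem is stated for exponential service times, the paper uses the insensitivity property (Lemma \ref{lemma:insen}) to conclude that $\Lambda(M)$, and hence its concavity, is unchanged when the travel times at the infinite-server nodes follow a general distribution with the same mean --- a step your proposal does not address explicitly, although a completed argument working directly with the normalization constants $G(M)$ (which depend on the IS nodes only through their means) would sidestep it. To repair your proof, either cite the Shanthikumar--Yao theorem and verify its rate condition as above, or actually carry out the inductive argument on the convolution recursion that you outline.
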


\begin{proof}
 For the system under exponential travel time distribution without the charging stations, this result is established in \cite[Theorem 2, p. 202]{george2011fleet}. We show that essentially the same argument holds for our case with the charging stations and any travel time distribution in \ref{app:concave}.
\end{proof}

In the following lemma, we show that, if there are more vehicles in the system, the availability at every departure point will increase. 

\begin{lemma}{\label{lemma:avail}}
	The availability function $A_i(M)$ at each departure point $i\in S$ is non-decreasing with $M$.
\end{lemma}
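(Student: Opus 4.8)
The plan is to reduce the claim to a log-concavity property of the normalization constant $G(\cdot)$. By \eqref{equ:avail} we have $A_i(M) = \frac{\lambda_i}{\alpha_i}\,\Lambda(M)$, and since $\lambda_i$ and $\alpha_i$ do not depend on $M$, it suffices to show that the system throughput $\Lambda(M) = G(M-1)/G(M)$ is non-decreasing in $M$. Because $G(M) > 0$ for every $M \ge 0$, the inequality $\Lambda(M+1) \ge \Lambda(M)$ is equivalent, after cross-multiplying, to $G(M)^2 \ge G(M-1)\,G(M+1)$, i.e. to the sequence $\bigl(G(M)\bigr)_{M\ge 0}$ being log-concave. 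So the whole lemma comes down to proving that $G$ is log-concave.

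First I would write $G$ as a convolution over the nodes. For each node $i \in \mathcal{N}$ set $g_i(n) = \lambda_i^{n}/\prod_{k=1}^{n} u_i(k)$ for $n \ge 0$ (with $g_i(0) = 1$), so that by \eqref{equ:p_n} one has $G(M) = \sum_{n_1+\cdots+n_N = M}\ \prod_{i=1}^{N} g_i(n_i)$; in other words $G = g_1 * g_2 * \cdots * g_N$ is the $N$-fold discrete convolution of the per-node sequences $g_i$. Since all the model parameters ($\lambda_i$, $\alpha_i$, $T_{jl}$, $t_i$, and $v_i \ge 1$) are strictly positive, every $g_i(n)$ is strictly positive, so these sequences have no internal zeros. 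Next I would check that each $g_i$ is itself log-concave: the ratio of consecutive terms is $g_i(n)/g_i(n-1) = \lambda_i / u_i(n)$, and by \eqref{equ:u_i} the service rate $u_i(n)$ is non-decreasing in $n$ for every node type — it is the constant $\alpha_i$ at a single-server queue (for $n \ge 1$), it equals $n/T_{jl}$ at an infinite-server queue, and it equals $\min\{n,v_i\}/t_i$ at a finite-server queue. Hence $g_i(n)/g_i(n-1)$ is non-increasing in $n$, which is precisely log-concavity of $g_i$; in particular this disposes of the finite-server queues uniformly, with no special case analysis at the threshold $n = v_i$.

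Finally I would invoke the classical fact that the convolution of two strictly positive log-concave sequences is again strictly positive and log-concave; applying it $N-1$ times shows that $G = g_1 * \cdots * g_N$ is log-concave, and by the first step this proves the lemma. As a by-product the argument shows that $\Lambda(M)$ is non-decreasing in $M$ (more vehicles yield higher system throughput); the same monotonicity can alternatively be read off from the reasoning behind Lemma \ref{lemma:concave}.

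The one genuinely substantive step is the closure of log-concave sequences under convolution, which I expect to be the main obstacle in the write-up: it must be stated with the right hypotheses (strict positivity / absence of internal zeros, both of which hold here), and one then either cites a standard reference or includes the short elementary proof for two sequences before finishing by induction on the number of nodes. Everything else — the reduction to log-concavity of $G$ and the verification that each $g_i$ has non-increasing consecutive-term ratios — is routine given \eqref{equ:avail}, \eqref{equ:p_n}, and \eqref{equ:u_i}.
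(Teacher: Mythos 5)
Your proof is correct, but it takes a genuinely different route from the paper's. The paper's own proof is a one-liner: it writes $A_i(M) = \frac{\lambda_i}{\alpha_i}\Lambda(M)$ via \eqref{equ:avail} and imports the monotonicity of $\Lambda(M)$ from the first part of the proof of Lemma \ref{lemma:concave}, which in turn rests on citing \cite[Theorem 1]{shanthikumar1988second} (the throughput of a closed Jackson network is nondecreasing concave in the population whenever each service rate $u_i(\cdot)$ is nondecreasing concave in the local queue length) and then invoking the insensitivity property (Lemma \ref{lemma:insen}) to pass from exponential to general travel times. You instead prove monotonicity of $\Lambda(M)=G(M-1)/G(M)$ from scratch: you reduce it to log-concavity of the sequence $G(\cdot)$, decompose $G$ as the convolution of the per-node sequences $g_i(n)=\lambda_i^{n}/\prod_{k=1}^{n}u_i(k)$ (the same decomposition the paper uses for computation in Subsection \ref{sub:convolution}), verify each $g_i$ is positive and log-concave because $u_i(n)$ is non-decreasing in $n$ by \eqref{equ:u_i}, and close with preservation of log-concavity under convolution. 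This buys you a self-contained argument that needs only monotonicity (not concavity) of the service rates and that works directly with the BCMP product form, so no separate insensitivity step is required; what it does not deliver is the concavity of $\Lambda(M)$, which the paper's cited theorem gives simultaneously and which is needed elsewhere (Lemma \ref{lemma:concave}, Theorem \ref{thm:fleet}). The one step you flag as substantive --- closure of strictly positive log-concave sequences under convolution --- is indeed the crux, but it is classical (such sequences are exactly the $\mathrm{PF}_2$ sequences, closed under convolution by Cauchy--Binet), so with a citation or the short two-sequence proof plus induction your argument is complete.
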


\begin{proof}{}
	From the first part of proof in Lemma \ref{lemma:concave} and \eqref{equ:avail} $A_i(M) =\frac{\lambda_i}{\alpha_i}  \Lambda(M)$, we find that $A_i(M)$ is non-decreasing with $M$.
\end{proof}

Therefore, if there exists $M_{\epsilon_i}$ such that $ A_i(M) \geq 1-\epsilon_i$ holds for $ M \geq M_{\epsilon_i}$, let $M_\epsilon = \max_{i \in S} M_{\epsilon_i}$, we can conclude that the constraint \eqref{equ:constraint} is satisfied for $ \forall M \geq M_\epsilon$.

\begin{theorem}{\label{thm:fleet}}
	If $g(M) \rightarrow \infty$ as $M \rightarrow \infty$, the optimization problem above has at most two solutions.
\end{theorem}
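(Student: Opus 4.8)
The plan is to use the concavity of $f$ (Lemma~\ref{lemma:concave}) to reduce the claim to a statement about discrete concave functions, and then to strengthen ``concave'' to ``strictly concave'' on the revenue part so that the set of optimizers cannot exceed two consecutive integers. Write $R(M)=\sum_{i\in I}z_i\Lambda_i(M)=\big(\sum_{i\in I}z_i\lambda_i\big)\Lambda(M)$ by \eqref{equ:Lambda2}, with $\sum_{i\in I}z_i\lambda_i>0$. First I would show that the maximum is attained and that the set of maximizers is a block of consecutive integers. Since the throughput of any single-server node $i\in S$ cannot exceed its rate $\alpha_i$, we have $\lambda_i\Lambda(M)\le\alpha_i$, so $\Lambda$ is bounded above (and non-decreasing, by Lemma~\ref{lemma:avail}); hence $R$ is bounded above and, since $g(M)\to\infty$, $f(M)\to-\infty$. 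Thus, if the feasible set is non-empty, $f$ attains its maximum over it on a finite set (if it is empty, the claim is vacuous). Writing $d(M)=f(M+1)-f(M)$, discrete concavity of $f$ means $d$ is non-increasing, so the set of unconstrained maximizers $\{M:d(M-1)\ge 0\text{ and }d(M)\le 0\}$ is a contiguous block $\{m,\dots,m+k\}$.

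Next I would show $k\le 1$. If $k\ge 2$ then $f(m)=f(m+1)=f(m+2)$, so the second difference of $f$ at $m+1$ vanishes:
\[
\big(R(m)+R(m+2)-2R(m+1)\big)-\big(g(m)+g(m+2)-2g(m+1)\big)=0 .
\]
The first bracket is $\le 0$ (as $R$ is a positive multiple of the concave function $\Lambda$) and the second is $\ge 0$ ($g$ is convex), so both are zero; in particular $\Lambda(m)+\Lambda(m+2)=2\Lambda(m+1)$, contradicting \emph{strict} concavity of $M\mapsto\Lambda(M)$. Hence the unconstrained problem has at most two maximizers, namely $\{m\}$ or $\{m,m+1\}$.

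To incorporate the availability constraint, note that $A_i(M)=\tfrac{\lambda_i}{\alpha_i}\Lambda(M)$ is non-decreasing (Lemma~\ref{lemma:avail}), so the feasible set is a ray $\{M\in\Na:M\ge M_\epsilon\}$. If some unconstrained maximizer lies in this ray, the constrained optimal value equals the unconstrained one and the set of constrained solutions is the intersection of the block with the ray, of size $\le 2$. Otherwise the whole block lies below $M_\epsilon$; then $d(m+k)<0$ (else $m+k+1$ would belong to the block), and since $d$ is non-increasing and $M_\epsilon>m+k$, we get $d(M)<0$ for all $M\ge M_\epsilon$, so $f$ is strictly decreasing on the feasible ray and $M_\epsilon$ is the unique solution. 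In all cases the problem has at most two solutions.

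The main obstacle is the \emph{strict} concavity of the throughput $\Lambda(M)=G(M-1)/G(M)$ used above: Lemma~\ref{lemma:concave} and its proof in~\ref{app:concave} deliver only the weak inequality, and weak discrete concavity by itself permits a maximizing plateau of length three or more. I would obtain the strict inequality from the recursive (convolution) computation of the normalization constant $G(\cdot)$ recalled in Subsection~\ref{sub:convolution}, exploiting that our network is non-degenerate --- it contains at least one FCFS single-server node and at least one infinite-server node, each with a positive visit ratio --- so that $\Lambda$ cannot be affine on three consecutive population sizes; this sharpening of the estimate in~\ref{app:concave} is where the real content lies.
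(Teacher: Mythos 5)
Your overall strategy coincides with the paper's: bound $\Lambda(M)$ (hence the revenue) from above, use $g(M)\to\infty$ to force $f(M)\to-\infty$, and then invoke discrete concavity of $f$ to locate the maximizer via the sign change of the first difference. You go further than the paper in two respects that are genuinely useful: you handle the availability constraint explicitly (the feasible set is a ray by Lemma~\ref{lemma:avail}, and you correctly split into the case where the unconstrained block meets the ray and the case where $M_\epsilon$ is the unique solution), and you correctly identify that weak concavity alone only shows the set of maximizers is a \emph{contiguous block}, which could a priori have length three or more.

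That last observation, however, is also where your proof has a genuine gap. The entire bound ``at most two solutions'' hinges on ruling out $f(m)=f(m+1)=f(m+2)$, which you reduce (correctly, via convexity of $g$) to strict concavity of $M\mapsto\Lambda(M)$ at the plateau. But you never prove this: you only state that you ``would obtain'' it from the convolution recursion for $G(\cdot)$ and that ``this is where the real content lies.'' Lemma~\ref{lemma:concave} and the cited result of Shanthikumar--Yao deliver only the weak inequality $\Lambda(M)+\Lambda(M+2)\le 2\Lambda(M+1)$, so as written your argument establishes only that the solution set is a contiguous block intersected with a ray, not that it has cardinality at most two. (For what it is worth, the paper's own proof glosses over exactly the same point: it tacitly assumes $\Delta f(M_1)=0$ can occur at only one index, which is equivalent to the strictness you flag. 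So you have located the real difficulty, but you have not resolved it; to complete the proof you would need to actually carry out the sharpening of the concavity estimate, e.g.\ by showing from the recursion $G_j(m)=k_j * G_{j-1}(m)$ with at least one single-server node of positive visit ratio that $G(M)^2 > G(M-1)G(M+1)$ cannot degenerate to an equality chain over three consecutive populations.)
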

\begin{proof}
	Take the backward discrete derivative as
	\begin{align*}
		\Delta f(M)=f(M)-f(M-1)
	\end{align*} 
	We know from \eqref{equ:avail} that $\Lambda(M)$ is upper bounded. This yields 
	\begin{align*}
		f(M)=\Lambda(M) \sum_{i\in I} z_i \lambda_i-g(M)
	\end{align*}
	we have $f(+\infty) \rightarrow -\infty$. Since $f$ is concave from Lemma \ref{lemma:concave}, $f(M)$ is decreasing when $M$ is sufficiently large, then there exists a critical point, such that either (i) $\Delta f(M_1)=0$ or (ii) $\Delta f(M_2)>0 $ and $\Delta f(M_2+1)<0$. Then $M^*=M_1$ or $(M_1-1)$ in the first case or $M^*=M_2$ in the second case. 
\end{proof}

From the solution provided in the proof above, we find that the optimal fleet size $M^*$ is determined by various parameters: routing probability matrix $r_{ij}$, service rate $u_i(n_i)$, revenue per service $z_i$, fleet operating cost $g(M)$ and quality of service requirement $\epsilon$. In Section \ref{sec:numeric}, we show through numerical simulation  the effect of these parameters on the optimal fleet size. We summarize the procedure  to find the optimal fleet size in Algorithm \ref{alg:fleet}.

\begin{algorithm}
	\SetAlgoLined
	\SetKwInOut{Input}{Input}\SetKwInOut{Output}{Output}
	\Input{$f(M)$, $A_i(M)$, and $\epsilon_i$ for all $i \in S$}
	\Output{$M^*$}
	\BlankLine
	\For{$M \xleftarrow{} 1$ \KwTo $\infty$}{
	    \If{ $A_i(M) \geq \epsilon_i$, $\forall i \in S$}{
    	    \If{$\Delta f(M) = 0$}{
    	        \eIf{$A_i(M-1) \geq \epsilon_i$, $\forall i \in S$}{
    	            \Return{$M$ and $(M-1)$}\;
    	            }{
    	            \Return{$M$}\;}
    	    }
    	    \If{$\Delta f(M) > 0$ {\bf and} $\Delta f(M+1) < 0$ }{
    	        \Return{$M$}\;
    	    }
	    }
	}
% 	\Return{$\emptyset$}\;
	\caption{Optimal Fleet Sizing}{\label{alg:fleet}}
\end{algorithm}

% the optimal fleet size $M^*$ can be summarized as follows:
	
% 	If $M_1$ exists, i.e. $\Delta f(M_1)=0$ holds, then
% 	\begin{align*}
% 		M^*=
% 		\begin{cases}
% 			M_\epsilon      & \text{if } M_\epsilon \geq M_1\\
% 			M_1 \text{ and }  (M_1-1)  &  \text{if } M_1-1 \geq M_\epsilon\\
% 			\emptyset       & \text{if } M_{\epsilon_i} \text{\space does't exist for some } i \in S
% 		\end{cases}
% 	\end{align*}
% 	Otherwise, $M_2$ exists, i.e. $\Delta f(M_2)>0 $ and $\Delta f(M_2+1)<0$ hold, then
% 	\begin{align*}
% 		M^*= 
% 		\begin{cases}
% 			M_\epsilon &  \text{\space if } M_\epsilon \geq M_2\\
% 			M_2  & \text{\space if } M_2-1 \geq M_\epsilon\\
% 			\emptyset & \text{\space if } M_{\epsilon_i} \text{\space does't exist for some } i \in S
% 		\end{cases}
% 	\end{align*}
	
% 	The case of no solution is because $A_i(M)$ may not reach $1-\epsilon_i$ even when $M\rightarrow \infty$, where $\epsilon_i$ is extremely small or $\frac{\lambda_i}{\alpha_i}$ is relatively small for some $i$. 
	
% Another interesting point drawn from this section is that, if the service rate at one node $i$ is changed, e.g., add one more charger at one charging point $i \in F$, then the throughput of all nodes will change as well as the system profit. Therefore, does there exist an optimal charger allocation scheme in order to maximize the profit? In the following section, we try to find the answer.   

\section{Charger Allocation}\label{sec:charge}

In practice, there are usually very limited spaces for charging in the downtown area (the rent is high) and building charging infrastructure takes a non-trivial amount of money and time. As a result, the service provider needs to decide on the location of the charging stations and the number of chargers to be installed at each charging station. Intuitively, if more chargers are built, electric vehicles will spend less time waiting or driving around looking for unoccupied chargers, which leads to more availability. On the other hand, building and operating more chargers will increase operating costs. Therefore, there needs to be a trade-off between quality of service and operating cost.

We model it as a profit maximization problem, by controlling $V$, where $V=(v_1,v_2,\ldots,v_f)^T$ is the vector of the number of chargers at each charging point $i\in F$. Throughout this section, we fix the fleet size to $M$ and consider the throughput and availability as a function of $V$. Towards this end, by a slight abuse of notation, we let the throughput at node $i$ be denoted by $\Lambda_i(V)$, the system throughput be denoted by $\Lambda(V)$, and the availability by $A_i(V)$.

Let $\hat{V}=(\hat{v}_1,\ldots,\hat{v}_f)$ be the maximum number of chargers allowed at each point due to limited space or power constraint. We further assume that all chargers are identical and have the same charging speed in this section, i.e., for mean charging time defined in \eqref{equ:u_i}, we have $t_i = t_j$ $\forall i, j \in F$.

Let $z_i$ be the average revenue per service at node $i \in I$. We further assume that there is a penalty of $\beta_k$ dollars if there is a passenger loss, i.e., passenger finds no vehicle at departure point $k \in S$ and leaves the system. From Lemma \ref{lemma:loss}, the penalty per unit time at node $k$ is $\beta_k \alpha_k (1-A_k(V))$, where $\alpha_k$ is the passenger arrival rate. Let $c_i$ be the average cost for maintaining one charger at charging node $i\in F$ per unit time, which captures different rent and electricity rates at various places. Thus, the operating cost of chargers is $c_j v_j$ at charging point $j \in F$. The resulting optimization problem can be formulated as follows:

\begin{align}
	&\max_{V \in \Na^F} \sum_{i\in I} \Lambda_i(V) z_i -\sum_{k\in S}\beta_k \alpha_k (1-A_k(V)) -\sum_{j \in F}c_j v_j  \\
	& s.t. \text{\space} V \leq \hat{V}  
\end{align}
where the objective function is the revenue minus penalty due to loss of a passenger and the operating cost. We want to maximize it by controlling the number of chargers $V$ at various charging points.  The constraint means that the number of chargers at each charging point $i \in F$ is upper bounded by $\hat{v_i}$. 

We now simplify the objective function. Similar to \eqref{equ:Lambda1} in Lemma \ref{lemma:BCMP}, we define the system throughput under a fixed $M$ as 
\begin{align}
	\Lambda(V) = \frac{G(M-1,V)}{G(M,V)},
\end{align}
where $G$ is the normalization constant introduced in \eqref{equ:p_n}. Therefore, following \eqref{equ:Lambda2}, the actual throughput of each node $i \in I$ is 
\begin{align*}
	\Lambda_i(V)=\lambda_i \Lambda(V).
\end{align*}
Using \eqref{equ:avail}, the above optimization problem can be rewritten as
\begin{align}
	&\max_{V \in \Na^F} \quad h(V) := \Lambda(V) \bar{Z} -\sum_{j \in F}c_j v_j -\sum_{k \in S}\beta_k \alpha_k \\
	&s.t. \text{\space} V \leq \hat{V} {\label{equ:upperbound}}
\end{align}
where $\bar{Z}$ is independent from $V$ and defined as follows,
\begin{align}
	\bar{Z}=\sum_{i \in I}\lambda_i z_i+\sum_{k \in S}\lambda_k \beta_k
\end{align}{}
Let $e_j\in\{0,1\}^f$ denote the unit vector with $1$ along the $j^{th}$ dimension and $0$ otherwise. In the following theorem, we show the concavity of objective function with respect to $v_j$, the number of chargers at charging point $j$, for all $ j \in F$.  
\begin{theorem}{\label{thm:concavity}}
The following holds:
\begin{enumerate}
    \item The map $v_j\mapsto \Lambda(V)$ is an increasing concave function for all $j \in F$.
    \item The objective function $h(V)$ satisfies $h(V)+h(V+2 e_j) \leq 2 h(V+e_j)$ for all $j\in F$.
    \item The map $v_j\mapsto A_i(V)$ is an increasing concave function for all $i, j \in F$.
\end{enumerate}
\end{theorem}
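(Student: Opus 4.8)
The plan is to reduce all three claims to the single statement that $v_j\mapsto\Lambda(V)$ is nondecreasing and discretely concave, and then to prove that statement by isolating the dependence on $v_j$ through a convolution. Parts~2 and~3 are immediate once part~1 is known: by \eqref{equ:avail} we have $A_i(V)=(\lambda_i/\alpha_i)\,\Lambda(V)$ with $\lambda_i/\alpha_i>0$ a constant independent of $V$, so part~3 is part~1 rescaled; and since $\bar Z>0$, $\sum_{j\in F}c_jv_j$ is affine in $v_j$ (hence discretely concave), and $\sum_{k\in S}\beta_k\alpha_k$ is constant, $h(V)=\bar Z\,\Lambda(V)-\sum_{j}c_jv_j-\sum_{k}\beta_k\alpha_k$ is discretely concave in $v_j$ as soon as $\Lambda$ is. So I fix $M$ and all $v_i$ with $i\neq j$, and study $v_j\mapsto\Lambda(V)=G(M-1,V)/G(M,V)$.

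Next I would write down the Norton/Buzen decomposition: deleting node $j$ leaves a sub-network whose normalizing constant $\tilde G_j(\cdot)$ does not depend on $v_j$, and
\[
G(m,V)=\sum_{n=0}^{m}c_n(v_j)\,\tilde G_j(m-n),\qquad c_n(v_j)=\frac{(\lambda_j t_j)^n}{\prod_{k=1}^{n}\min(k,v_j)},
\]
where $c_n(v_j)$ is the factor of node $j$ in the product form \eqref{equ:p_n} (via \eqref{equ:u_i}). I would then record three elementary facts. (i) $\tilde G_j(\cdot)$ is a log-concave sequence: each node contributes a strictly positive log-concave factor sequence --- an IS node contributes $x^n/n!$, an SS node a geometric sequence, an FS node the sequence $\rho^n/\prod_{k\le n}\min(k,v)$ whose successive ratios $\rho/\min(n+1,v)$ are nonincreasing --- and a convolution of positive log-concave sequences is log-concave; the same computation shows $m\mapsto G(m,V)$ is log-concave, the fact already behind Lemma~\ref{lemma:avail}. (ii) For each fixed $n$, $v\mapsto c_n(v)$ is log-convex, since $c_n(v+1)/c_n(v)=\bigl(v/(v+1)\bigr)^{(n-v)^+}$ is nondecreasing in $v$. (iii) For each fixed $v$, $c_n(v+1)/c_n(v)$ is nonincreasing in $n$ (the array $(n,v)\mapsto c_n(v)$ is reverse-regular of order $2$).

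For monotonicity, $\Lambda(v_j+1)\ge\Lambda(v_j)$ is equivalent to $R(M-1)\ge R(M)$ for $R(m):=G(m,v_j+1)/G(m,v_j)$. Writing $R(m)=\bigl(\sum_n\theta_n w^{(m)}_n\bigr)\big/\bigl(\sum_n w^{(m)}_n\bigr)$ with $\theta_n=c_n(v_j+1)/c_n(v_j)$ and $w^{(m)}_n=c_n(v_j)\tilde G_j(m-n)$, fact~(iii) makes $\theta_n$ nonincreasing in $n$, while fact~(i) makes the likelihood ratios $w^{(m+1)}_n/w^{(m)}_n=\tilde G_j(m+1-n)/\tilde G_j(m-n)$ nondecreasing in $n$; hence passing from $m$ to $m+1$ shifts the probability weights toward larger $n$ and lowers the $\theta$-average, i.e.\ $R$ is nonincreasing and $\Lambda$ is nondecreasing in $v_j$. (This also shows $G$ is log-submodular in $(m,v_j)$, which is reused below.)

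Concavity is the crux and the main obstacle. Clearing denominators turns $\Lambda(v_j)+\Lambda(v_j+2)\le 2\Lambda(v_j+1)$ into $b_1(a_0b_2+a_2b_0)\le 2a_1b_0b_2$ with $a_\ell:=G(M-1,v_j+\ell)$ and $b_\ell:=G(M,v_j+\ell)$, and the difficulty is that by fact~(ii) the sequences $a_\ell$ and $b_\ell$ are individually log-\emph{convex} in $\ell$, so the separate one-variable properties do not yield the inequality and in fact push the wrong way; what is needed is the \emph{joint} behaviour of $(m,v_j)\mapsto G(m,v_j)$ --- log-concave in $m$, log-convex in $v_j$, log-submodular in the pair --- fed into a second-difference estimate. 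I would attack this in two ways. First, mimic the George--Xia second-difference argument used for concavity of $\Lambda(M)$ in $M$ (the proof of Lemma~\ref{lemma:concave} in the appendix), now with $v_j$ in the role of the population index and the three structural facts above replacing the population-only estimates. Second, in parallel, aggregate the complement into one load-dependent flow-equivalent server with rates $\mu_A(k)=\lambda_j\,\tilde G_j(k-1)/\tilde G_j(k)$, which is nonincreasing in $k$ by fact~(i), reducing the question to concavity of the throughput of a \emph{two}-node closed network in the number of servers at one multi-server station --- a classical server/marginal-allocation concavity result whose hypotheses are met here. Either route concentrates essentially all the effort in this single second-order estimate; the reductions and the monotonicity step are routine by comparison.
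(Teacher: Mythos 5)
Your reductions of parts~2 and~3 to part~1 are exactly the paper's (the appendix does the same two lines with $A_i(V)=(\lambda_i/\alpha_i)\Lambda(V)$ and the affine cost terms), and your monotonicity argument via the likelihood-ratio ordering of the weights $c_n(v_j)\tilde G_j(m-n)$ is correct and more self-contained than anything in the paper. The genuine gap is precisely where you flag it: the discrete concavity of $v_j\mapsto\Lambda(V)$ is never established. You offer two candidate strategies and execute neither. Route~(a) --- transplanting the George--Xia population-concavity argument with ``$v_j$ in the role of the population index'' --- is not a proof sketch but a hope; the second-difference machinery for $\Lambda(M)$ in $M$ rests on the log-concavity of $m\mapsto G(m)$, and as you yourself observe, $\ell\mapsto G(m,v_j+\ell)$ is log-\emph{convex}, so that argument does not port over. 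Route~(b) is closer to viable, but it contains an error and a hidden circularity: the flow-equivalent rate $\mu_A(k)=\lambda_j\tilde G_j(k-1)/\tilde G_j(k)$ is \emph{nondecreasing} in $k$ (log-concavity of $\tilde G_j$ gives $\tilde G_j(k-1)/\tilde G_j(k)$ increasing, as it must be --- subnetwork throughput grows with population), not nonincreasing as you state; and to invoke the ``classical server-allocation concavity result'' for the two-node aggregate you need $\mu_A(k)$ to be nondecreasing \emph{concave} in $k$, which is itself the nontrivial theorem that throughput is concave in population --- the content of Lemma~\ref{lemma:concave}.

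For comparison, the paper does not fight this battle at the level of normalization constants at all: it observes that every service rate $u_i(n_i)$ in \eqref{equ:u_i} is nondecreasing concave in $n_i$, cites Shanthikumar--Yao (Theorem~1 of the 1987 server-allocation paper) for the statement that in such a closed exponential network the throughput is increasing concave in the number of servers $v_j$ at any station, and then uses the insensitivity property (Lemma~\ref{lemma:insen}) to pass from exponential to general travel-time distributions at the IS nodes --- a step your approach absorbs automatically since the product form already depends only on the means. If you want to keep your elementary framework, the honest conclusion is that the joint second-order property of $(m,v_j)\mapsto G(m,v_j)$ you would need is exactly what that cited theorem encodes, and you should either cite it (as the paper does) or reproduce its proof; as written, the central claim of the theorem is asserted, not proved.
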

\begin{proof}{}
	See \ref{app:concavity}.
\end{proof}

This third part of the theorem points towards an interesting property of the network: adding chargers at any charging point will increase the system throughput and the availability of any departure point in the system. Therefore, service providers can firstly allocate chargers to the charging points which can bring high system throughput increment at a low cost. 

We now outline an algorithm, proposed in Section 4 of \cite{shanthikumar1988server}, that computes an approximately optimal charger allocation in Algorithm \ref{alg:charger}. This algorithm is inspired by the marginal allocation algorithm of \cite{fox1966discrete}. The underlying idea for this algorithm is to identify the location where adding one more charger leads to the maximum increment in the profit. This process is continued until either the increment becomes negative or the upper bound is reached.

% \begin{algorithm}
% 	\SetAlgoLined
% 	\SetKwInOut{Input}{Input}\SetKwInOut{Output}{Output}
% 	\Input{$h(V)$, $A(V)$, $\hat{V}$}
% 	\Output{$V^*$}
% 	\BlankLine
% 	$k=1$, $V^k=(1,1,\ldots,1)$, $\mathcal{F}= \{1,2,\ldots,f\}$\;
% 	\While{$V^k \leq \hat{V}$}{     
% 		\If{$v^k_j= \hat{v_j}$}{
% 			$\mathcal{F} \xleftarrow{} \mathcal{F} - \{j\}$\;
% 		}      
% 		\If{$max_{j \in \mathcal{F}} h(V^k+e_j)-h(V^k) \leq 0$}{
% 			\Return{$V^k$}\;
% 		}    
		
% 		$j^*=argmax_{j \in \mathcal{F}}  \text{\space} h(V^k+e_j)-h(V^k)$\;
% 		$V^{k+1} \xleftarrow{} V^{k}+e_{j^*}$\;
% 		$k \xleftarrow{} k+1$\;
% 	}
% 	\canwe{\Return{$\hat{V}$};}
% 	\caption{Charger Allocation Algorithm}{\label{alg:charger}}
% \end{algorithm}

\begin{algorithm}
	\SetAlgoLined
	\SetKwInOut{Input}{Input}\SetKwInOut{Output}{Output}
	\Input{$h(V)$, $A(V)$, $\hat{V}$}
	\Output{$V^*$}
	\BlankLine
	$k=1$, $V^k=(1,1,\ldots,1)$, $\mathcal{F}= \{1,2,\ldots,f\}$\;
	\While{$V^k \leq \hat{V}$}{     
		\If{$v^k_j= \hat{v_j}$}{
			$\mathcal{F} \xleftarrow{} \mathcal{F} - \{j\}$\;
		}
		$m = \max_{j \in \mathcal{F}} h(V^k+e_j)-h(V^k)$ \; 
		$j^*=\arg\max_{j \in \mathcal{F}}  \text{\space} h(V^k+e_j)-h(V^k)$\;
		\eIf{$m> 0$}{
		    $V^{k+1} \xleftarrow{} V^{k}+e_{j^*}$\;
		    $k \xleftarrow{} k+1$\;
		}{
		    \Return{$V^k$}\;
		}
	}
	\Return{$\hat{V}$};
	\caption{Charger Allocation Algorithm}{\label{alg:charger}}
\end{algorithm}

% \if 0
% \textbf{Charger Allocation Algorithm: }

% *************************algorithm format

% 1) Let $V^1=(1,1,\ldots,1)$, $k=1$, $\mathcal{F}= \{1,2,\ldots,f\}$

% 2) $V^{k+1}=V^{k}+e_{j^*}$, where $e_{j^*}$ is the $j^*$th unit vector and $j^*=argmax_{j \in \mathcal{F}}  \text{\space} h(V^k+e_j)-h(V^k)$

% 3) If $v^k_j= \hat{v_j}$, then  $\mathcal{F} \xleftarrow{} \mathcal{F} - \{j\}$.

% 4) If $A_i(V^k) \geq 1-\epsilon_i$ holds for all $i \in S$, and $max_{j \in \mathcal{F}}  \text{\space} h(V^k+e_j)-h(V^k) \leq 0$, then terminate the iteration. $V^*=V^k$.

% 5) If $V^k = \hat{V}$, then terminate the iteration. $V^*=V^k$.

% 6) Go back to step 2.

% \fi 
% \canwe{
% This algorithm addresses both high accuracy and efficiency in performance analysis. As it is a high dimension integer programming problem, this algorithm significantly reduces the computation complexity by leveraging the queueing model and concavity property.  Originally, the number of possible allocations is ${|V|-1\choose |F|-1}$. As a contrast, Algorithm \ref{alg:charger} only requires at most $|V|\times|F|$ profit evaluations. 
% }
If there are only two finite-server queues in the system whose number of chargers may change, we can guarantee that the solution found by the above algorithm is optimal. For the general case, its optimality remains a conjecture \cite{shanthikumar1988server}, which states that the solution found through this heuristic is usually the optimal solution, a proof of optimality is not available. 

\begin{theorem}\label{thm:v2}
If $|F|=2$,  then Algorithm \ref{alg:charger} generates the optimal solution.
\end{theorem}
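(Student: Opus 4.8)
The plan is to reduce the optimization to a function of $V=(v_1,v_2)$ alone, verify the two structural hypotheses under which the marginal-allocation (steepest coordinate ascent) procedure is known to be optimal, and then run that optimality argument, which in two coordinates is particularly clean. First I would discard the constant: since $h(V)=\Lambda(V)\bar Z-\sum_{j\in F}c_jv_j-\sum_{k\in S}\beta_k\alpha_k$ and the last sum is independent of $V$, maximizing $h$ over the box $\{V:(1,1)\le V\le\hat V\}$ is equivalent to maximizing $\tilde h(V):=\Lambda(V)\bar Z-c_1v_1-c_2v_2$. Algorithm~\ref{alg:charger} is exactly steepest coordinate ascent on $\tilde h$: from $(1,1)$ it increments the coordinate with the largest increment of $\tilde h$ among still-admissible directions and halts when that increment is non-positive or when $\hat V$ is reached. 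By Theorem~\ref{thm:concavity}(1)--(2), $v_j\mapsto\Lambda(V)$ is increasing and concave, so $v_j\mapsto\tilde h(V)$ is concave in each coordinate separately (concave plus affine), which is the first required property.

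The second required property is a joint second-order (``cross-difference'') property of the throughput: the quantity $\Lambda(V+e_1+e_2)-\Lambda(V+e_1)-\Lambda(V+e_2)+\Lambda(V)$ should have a definite sign, equivalently the increment of $\tilde h$ in one coordinate should vary monotonically in the other, so that the restriction of $\tilde h$ to each anti-diagonal $\{v_1+v_2=n\}$ is concave and the anti-diagonal maximizer moves by at most one unit when $n$ increases by one. For two finite-server nodes this can be extracted from the convolution recursion for the normalization constant: writing $G(M;v_1,v_2)$ and adding the two finite-server nodes one at a time, incrementing $v_j$ perturbs only the local factor $1/\prod_{k\le n}u_j(k)$ at node $j$ (which it lowers for $n>v_j$), and one tracks the induced change in $\Lambda(V)=G(M-1;V)/G(M;V)$; alternatively this is precisely the two-center second-order property established in \cite{shanthikumar1988server}, on which Algorithm~\ref{alg:charger} is based. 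Most of the needed computation is already available from the proof of Theorem~\ref{thm:concavity}.

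Third, with component-wise concavity and the anti-diagonal property in hand I would run the standard marginal-allocation optimality argument. By induction on $n=v_1+v_2$, the point $V^{(n)}$ visited by Algorithm~\ref{alg:charger} at population level $n$ maximizes $\tilde h$ over the admissible part of the anti-diagonal $\{v_1+v_2=n\}$: in the inductive step one uses that $V^{(n+1)}$ is obtained from $V^{(n)}$ by $+e_1$ or $+e_2$, that these are the only admissible anti-diagonal-$(n+1)$ neighbours of $V^{(n)}$, that the anti-diagonal maximizer shifted by at most one, and that $\tilde h$ is concave along the anti-diagonal, so steepest ascent lands on the new maximizer. When the algorithm halts at $W$ because no admissible coordinate increment is positive, the monotonicity of increments shows every admissible $+e_j$ step from any box point $\ge W$ is also non-positive, whence $\tilde h(V)\le\tilde h(W)$ for every feasible $V$ (decompose $V$ relative to $W$ coordinate by coordinate using component-wise concavity); when it instead halts at $\hat V$, the path is non-decreasing up to $\hat V$ and every feasible $V\le\hat V$, so $\hat V$ is optimal. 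Either way the returned vector maximizes $\tilde h$, hence $h$.

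The hard part is the second step. Component-wise concavity of closed-network throughput is classical (it is what Theorem~\ref{thm:concavity} relies on), but the sign of the cross-differences — how a charger at one station changes the marginal value of a charger at another — is delicate, and it is exactly this joint property that is available only for two finite-server stations; its multi-station analogue is not known to hold, which is why the optimality of Algorithm~\ref{alg:charger} for $|F|\ge 3$ remains a conjecture. A secondary and more routine nuisance is the bookkeeping around the box constraint $V\le\hat V$, the $(1,1)$ starting point, and ties in the steepest-ascent step, all of which must be handled so that the anti-diagonal induction is not broken when some coordinate saturates at $\hat v_j$.
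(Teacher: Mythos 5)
Your proposal is correct and follows essentially the same route as the paper: the cross-difference property you isolate as the key joint hypothesis is exactly the supermodularity of $h$ that the paper establishes in its Lemma \ref{lem:super} by citing \cite{shanthikumar1988server} (plus insensitivity to pass from exponential to general travel times), and your anti-diagonal induction on $n=v_1+v_2$ combined with coordinate-wise concavity from Theorem \ref{thm:concavity} and the box-constraint bookkeeping is the same argument given in \ref{app:v2}. No substantive difference to report.
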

\begin{proof}{}
Our proof follows the approach developed in \cite[Proposition 2, p. 339]{shanthikumar1988server}. We first show that our objective function $h$ is concave and supermodular. The proof of convergence of the algorithm proposed in \cite{shanthikumar1988server} is for a fixed number of servers. On the other hand, we relax this constraint, since in our case we can put as many chargers as possible. A detailed proof is presented in \ref{app:v2}.
\end{proof}

\section{Charger Selection}\label{sec:select}
For electric vehicles, two different types of chargers are available: fast chargers and slow chargers. The fast chargers can charge the vehicle rather quickly; it can charge a vehicle from 20\% charge to 80\% charge within 30 minutes. On the other hand, slow chargers would require over 6-8 hours to do the same. Intuitively, one would conjecture that having one fast charger is better than two slow chargers. In this section, we identify the conditions under which it is beneficial to have two slow chargers as opposed to one fast charger to improve the overall throughput of the system. The key insight we get here is that the throughput of the system is dependent on the wait time for the vehicles as well as the charging time of the vehicles. Having one fast charger certainly reduces the charging time, but it can potentially increase the wait time if the {\it coefficient of variation} of the distribution of charging time of the vehicles is somewhat larger than a threshold. It should be noted that the installation cost of a fast charging infrastructure is significantly higher than installing multiple slow chargers.

Suppose that the service provider has two possible options:
\begin{enumerate}
	\item Install one fast charger with mean charging time $t_0$.
	\item Install two slow chargers with mean charging time for each charger is $2t_0$.
\end{enumerate}{}

% If the charging time distribution is non-exponential, the stationary probability of the closed queueing network becomes non-product-form, and only approximation algorithms exist.

For ease of analysis, we discuss the charger selection problem by comparing individual queues under Poisson arrival. Let $\alpha_0$ denote the Poisson arrival rate of vehicles that need to be charged. The mean charging rate (number of vehicles charged per unit time) for option 1 is $\mu_1(n)=\frac{1}{t_0}$.  For option 2, the mean charging rate becomes
\begin{align}
	\mu_2(n)= \left\{
	\begin{aligned}
		&\frac{1}{2 t_0}, \text{\space} n=1\\
		&\frac{1}{t_0},  \text{\space} n \geq 2
	\end{aligned}
	\right.
	\label{equ:mu2}
\end{align}	
where $n$ is the number of vehicles at this charging point.

According to the model we assume in Subsection \ref{sub:fscharge}, we assume that upon arriving at a charging node, all the vehicles form an FCFS queue to wait for charging. Let $\gamma = \frac{\alpha_0}{\mu}$ denote the utilization of the queue, then both options have the same utilization, $\gamma_1=\gamma_2$.

Let $D_1$ and $D_2$ denote the average time delay of a vehicle at this node for both options (including waiting time and charging time). We show in the next section (see \eqref{equ:throughput} and the discussion that follows) that a smaller delay increases the system throughput of the closed queueing network. We want to find which option has a lower delay, thus it will have higher throughput, and as a consequence, a higher profit and better quality of service. 

Intuitively, one fast charger outperforms two slow chargers because $\mu_1(1) > \mu_2(1)$. In the following lemma, we show that it is true for exponential charging time, however, it may not hold for some general charging time distribution, which is proved in the following theorem. 

\begin{lemma}{}
	If charging time distribution is exponential, then $D_1 <D_2$, $\forall  \text{\space} \gamma \in (0,1)$,  
\end{lemma}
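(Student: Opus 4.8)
The plan is to compare the two single-node birth--death queues directly by computing their mean delays via Little's law, which requires only the stationary queue-length distributions. For option 1, the charging node behaves as an $M/M/1$ queue with arrival rate $\alpha_0$ and service rate $\mu = 1/t_0$, so its stationary distribution is geometric, $\pi^{(1)}_n = (1-\gamma)\gamma^n$, and the mean number in system is $L_1 = \gamma/(1-\gamma)$. For option 2, the node is a birth--death chain with birth rate $\alpha_0$ everywhere and death rates given by \eqref{equ:mu2}: rate $1/(2t_0)$ out of state $1$ and rate $1/t_0 = \mu$ out of every state $n\ge 2$. I would write down the balance equations, solve for $\pi^{(2)}_n$ in closed form — one gets $\pi^{(2)}_1 = 2\gamma\,\pi^{(2)}_0$ and $\pi^{(2)}_n = 2\gamma^{n}\pi^{(2)}_0$ for $n\ge 1$ after telescoping, with $\pi^{(2)}_0$ fixed by normalization — and then compute $L_2 = \sum_n n\,\pi^{(2)}_n$ as a rational function of $\gamma$ on $(0,1)$.

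Since both options share the same arrival rate $\alpha_0$ (no passenger loss at a charging node — every arriving vehicle eventually enters), Little's law gives $D_k = L_k/\alpha_0$, so the claim $D_1 < D_2$ is equivalent to $L_1 < L_2$ for all $\gamma\in(0,1)$. The final step is to verify this inequality between the two explicit rational functions of $\gamma$; after clearing the (positive) denominators it should reduce to a polynomial inequality in $\gamma$ on $(0,1)$ that can be checked by inspection of its coefficients or by factoring out an obvious positive term. Intuitively $L_2 > L_1$ because option 2 charges strictly slower whenever only one vehicle is present ($\mu_2(1) = 1/(2t_0) < 1/t_0 = \mu_1(1)$) and never faster, so option 2 stochastically dominates option 1 in queue length; one could alternatively run a coupling / sample-path argument to get $\pi^{(2)}$ stochastically larger than $\pi^{(1)}$ and hence $L_2 \ge L_1$ with strict inequality, bypassing the algebra.

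The main obstacle is mild: it is the bookkeeping in solving the birth--death recursion for option 2 and then simplifying $L_2 - L_1$ to a manifestly positive expression. One has to be slightly careful that the modified death rate only affects the transition out of state $1$, so the ratio $\pi^{(2)}_n/\pi^{(2)}_{n-1}$ equals $\gamma$ for every $n\ne 1$ but equals $2\gamma$ for $n=1$; getting the normalization constant and the resulting mean right is the only place an error could creep in. Everything else — the $M/M/1$ formula, the application of Little's law, and the reduction to $L_1 < L_2$ — is routine. I would also note explicitly that stability ($\gamma < 1$) guarantees both chains are positive recurrent so the stationary distributions and their means are well defined.
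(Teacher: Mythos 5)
Your proposal is correct and follows essentially the same route as the paper: both reduce the claim to comparing the mean delays of an $M/M/1$ and an $M/M/2$ queue with common utilization $\gamma=\alpha_0 t_0$, you by solving the birth--death balance equations and applying Little's law, the paper by quoting textbook waiting-time formulas. One bonus of your explicit computation is that it yields $L_2=\frac{2\gamma}{1-\gamma^2}$ and hence $D_2=\frac{2t_0}{1-\gamma^2}$, giving $D_2/D_1=\frac{2}{1+\gamma}>1$ for $\gamma\in(0,1)$; the paper's displayed $D_2=\frac{t_0}{1-\gamma^2}$ drops this factor of $2$ (a slip that, taken literally, would reverse the inequality), so your bookkeeping is the version to trust.
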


\begin{proof}{}
	For $M/M/1$ queue with arrival rate $\alpha_0$, service rate $\mu_1$ and utilization $\gamma_1=\frac{\alpha_0}{\mu_1}$, the average waiting time in queue is $w_1=\frac{\gamma_1}{\mu_1 (1-\gamma_1)}$ from \cite[p.82]{smith2018introduction}. Therefore, the average delay for option 1 is $D_1=w_1+t_0=\frac{t_0}{1-\gamma_1}$.
	
	For $M/M/2$ queue with  arrival rate $\alpha_0$, service rate $\mu_2(n)$ and utilization $\gamma_2=\frac{\alpha_0}{2\mu_2(1)}$, the average waiting time is queue is $w_2=\frac{\gamma_2^2}{\mu_2(1) (1-\gamma_2^2)}$, \cite[p.87]{smith2018introduction}. Thus the average delay for option 2 is $D_2=w_2+ 2t_0=\frac{t_0}{ 1-\gamma_2^2}$.
	
	As $\gamma_1=\gamma_2 \in (0,1)$, we can conclude that $D_1<D_2$, $\forall \gamma \in (0,1)$.
\end{proof}

Let $c^2=\frac{Variance}{Mean^2}$ denote the squared coefficient of variance of charging time, which measures the dispersion of the charging time distribution. 
\begin{theorem}\label{thm:d1d2}
	For any $\gamma \in (0,1)$, there exists a charging time distribution such that $D'_1 >D'_2$ for all  $c^2 >1+ \frac{2}{\gamma}$. 
\end{theorem}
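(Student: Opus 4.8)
The plan is to write both delays in terms of the mean charging time and the squared coefficient of variation $c^2$ only, and then reduce $D_1'>D_2'$ to an elementary inequality in $\gamma$ and $c^2$. Recall the setup preceding the statement: option~1 is an $M/G/1$ queue with Poisson rate $\alpha_0$, mean service time $t_0$ and load $\gamma_1=\alpha_0 t_0=\gamma$; option~2 is an $M/G/2$ queue with per-server mean service time $2t_0$ and load $\gamma_2=\alpha_0(2t_0)/2=\gamma$; and since a slow charger's charging time is exactly twice a fast charger's, the two options share the same $c^2$. Here $D_1',D_2'$ denote the general-distribution analogues of the exponential delays $D_1,D_2$ of the previous lemma.

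For option~1 I would use the exact Pollaczek--Khinchine formula: the mean queueing delay is $w_1'=\alpha_0\,\mathbb{E}[S_1^2]/(2(1-\gamma))$ with $\mathbb{E}[S_1^2]=(1+c^2)t_0^2$, so that
\[
D_1' \;=\; t_0+\frac{\gamma(1+c^2)\,t_0}{2(1-\gamma)} \;=\; \frac{2-\gamma+\gamma c^2}{2(1-\gamma)}\,t_0,
\]
which correctly collapses to $t_0/(1-\gamma)$ at $c^2=1$.

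For option~2, since the $M/G/2$ queue admits no exact closed form, I would invoke the Allen--Cunneen approximation $w_q^{M/G/c}\approx\frac{1+c^2}{2}\,w_q^{M/M/c}$ for Poisson arrivals (which is exact both for $M/M/c$ and for $M/G/1$). Plugging in the $M/M/2$ waiting time $w_2=2t_0\gamma^2/(1-\gamma^2)$ from the previous lemma gives
\[
D_2' \;=\; 2t_0+\frac{1+c^2}{2}\cdot\frac{2t_0\gamma^2}{1-\gamma^2} \;=\; \frac{2-\gamma^2+\gamma^2 c^2}{1-\gamma^2}\,t_0.
\]
Comparing, $D_1'>D_2'$ is equivalent (after cancelling the positive factor $t_0/(1-\gamma)$ and multiplying by $2(1+\gamma)>0$) to $(1+\gamma)(2-\gamma+\gamma c^2)>2(2-\gamma^2+\gamma^2 c^2)$, which simplifies to $\gamma(1-\gamma)c^2>(1-\gamma)(2+\gamma)$, i.e.\ to $c^2>1+\tfrac{2}{\gamma}$. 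The existence clause is then automatic: for every $c^2>1$ --- in particular every $c^2>1+2/\gamma$, since $\gamma<1$ --- there is a charging-time distribution with that squared coefficient of variation (take a two-point law, or a hyperexponential $H_2$), and by the formulas above any such distribution yields $D_1'>D_2'$.

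The step I expect to be the real obstacle --- and the only non-routine one --- is option~2: there is no exact delay formula for $M/G/2$, so the cleanest honest statement is that the theorem holds under the standard $M/G/c$ delay approximation used elsewhere in the paper, and I would flag this explicitly. If an exact result were demanded, the alternative is to pin down a tractable service law (e.g.\ a finite mixture of exponentials, which realizes any $c^2>1$), compute the $M/G/2$ stationary queue length by a matrix-geometric / spectral argument, and then re-derive the inequality --- substantially more work, and the place where the bookkeeping gets heavy.
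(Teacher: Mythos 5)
Your setup, your formula for $D_1'$ via Pollaczek--Khinchine, and the closing algebra (reducing $D_1'>D_2'$ to $\gamma(1-\gamma)c^2>(1-\gamma)(2+\gamma)$, i.e.\ $c^2>1+2/\gamma$) all match the paper exactly. The gap is precisely where you predicted it: your $D_2'$ rests on the Allen--Cunneen approximation $w_q^{M/G/2}\approx\frac{1+c^2}{2}w_q^{M/M/2}$, which is not exact, so as written the argument only establishes the theorem ``under a standard approximation'' rather than proving it. But the theorem is an \emph{existence} statement, and that is exactly the escape hatch the paper uses: it does not attempt a general $M/G/2$ analysis. Instead it picks the charging time to be $\exp(p_0/(2t_0))$ with probability $p_0$ and exactly $0$ with probability $1-p_0$ (so $c^2=2/p_0-1$, which sweeps out all $c^2>1$). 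For this particular law the $M/G/2$ wait is computable exactly: since FCFS scheduling is size-blind, the zero-sized jobs experience the same waiting time as the others while contributing no work, so the system is equivalent to an $M/M/2$ with thinned arrival rate $p_0\alpha_0$ and mean service $2t_0/p_0$ --- a time-dilation of the original $M/M/2$ by $1/p_0$ --- giving $w_2'=w_2/p_0$ exactly. Note $1/p_0=(1+c^2)/2$, so for this distribution the Allen--Cunneen factor happens to be exact and your $D_2'$ formula is correct verbatim; you only lacked the justification. The fix you sketched (``pin down a tractable service law'') is the right one, but the matrix-geometric machinery you anticipated is unnecessary --- the zero-inflation trick does it in two lines. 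With that one substitution your proof becomes the paper's proof.
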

\begin{proof}{}
	We prove this theorem by providing an extreme case of charging time distribution. 	For option 1, called $M/T_1/1$ queue,  the charging time follows the distribution
	\begin{align*}
		t_1= \left\{
		\begin{aligned}{}
			& \text{exp}(p_0/t_0 ) \text{\space} \text{w.p.} \text{\space} p_0\\
			& 0 \text{\space} \text{ w.p.} \text{\space} 1-p_0
		\end{aligned}
		\right.
	\end{align*}
	where exp($p_0/t_0$) denotes the exponential distribution with mean $t_0/p_0$. Then the mean of $t_1$ is $t_0$ and squared coefficient of variance is $c_1^2=\frac{2}{p_0}-1$.
	From P-K formula, we have $D'_1=\frac{t_0}{p_0}\frac{\gamma_1}{1-\gamma_1}+t_0$.
	
	For option 2, $M/T_2/2$ queue, its charging time for each charger follows the same distribution of option 1 with double mean, i.e.
	\begin{align*}
		t_2= \left\{
		\begin{aligned}{}
			& \text{exp}(p_0/2 t_0 ) \text{\space} \text{w.p.} \text{\space} p_0\\
			& 0 \text{\space} \text{ w.p.} \text{\space} 1-p_0
		\end{aligned}
		\right.
	\end{align*}
	So the mean of $t_2$ is $2t_0$ and squared coefficient of variance is $c_2^2=\frac{2}{p_0}-1$, same as $c_1^2$.
	
	In the following, we show that the average waiting time is $w'_2=w_2/p_0$, where $w_2$ is the average waiting time for $M/M/2$ queue. We notice that the waiting time experienced by non-zero jobs and zero-sized jobs is the same, as scheduling does not depend on size. We ignore the zero-sized jobs to calculate the waiting time of non-zero sized jobs. It is equivalent to the waiting time distribution of $M/M/2$ system with arrival rate $p_0 \alpha_0$ and mean charging time $\frac{2t_0}{p_0}$, which can be viewed as a $M/M/2$ system with arrival $\alpha_0$ and mean charging time $t_0$ slowed by factor $\frac{1}{p_0}$.
	Therefore, the average waiting time for $M/T_2/2$ system is $\frac{1}{p_0}$ times $M/M/2$ with arrival rate $\alpha_0$ and mean charging time $2t_0$, i.e., $w'_2=w_2/p_0=\frac{\gamma_2^2}{ p_0 u_2(1) (1-\gamma_2^2)}$. Finally, $D'_2=w_2'+2 t_0=\frac{2 t_0\gamma_2^2}{ p_0  (1-\gamma_2^2)}+2 t_0$.
	
	As $\gamma_1=\gamma_2$, we have $D'_1 > D'_2$ is equivalent to $\frac{\gamma}{1+\gamma} >p_0>0$. As $c_1^2=c_2^2=\frac{2}{p_0}-1$, we have $c^2>1+\frac{2}{\gamma}$.
\end{proof}

The above theorem indicates that two slow chargers can result in a lower overall delay than one fast charger, especially when the variance in charging time is relatively larger than the mean charging time. As stated previously, this happens because the large variance in charging time leads to a long waiting time, which reduces the waiting time by adding one more charger. Although the average charging time is doubled due to slow chargers, the decrease in waiting time is more significant than the increase of charging time, thus the total delay may decrease with two slow chargers.

Admittedly, the distribution constructed in the proof of Theorem \ref{thm:d1d2} is not representative of the actual charging time distribution. Nonetheless, we have found through simulations that for various distributions of charging time, there is a distribution dependent threshold for $c^2$, beyond which two slow chargers have lower delay (wait time plus charging time) than one fast charger. Indeed, in Subsection \ref{sub:onefast}, we show numerically the case for gamma distributed charging times and Inverse Gaussian distributed charging times.

\section{Computational Algorithms}\label{sec:algorithm}
In this section, we introduce some efficient algorithms for performance analysis, especially for large-scale networks. 

\subsection{Convolution Algorithm}\label{sub:convolution}

In order to compute the stationary state probability, the normalizing constant $G(M)$ is required as stated in \eqref{equ:p_n}. Explicitly, $G(M)$ has the following expression.
\begin{align}{\label{equ:gm}}
	G(M)=\sum_{n_1+\ldots+n_N=M} \prod_{i=1}^N \frac{\lambda_i^{n_i}}{\prod_{k=1}^{n_i} u_i(k)}
\end{align}

Direct computation of $G(M)$ as a summation over all possible states, which has a cardinality of ${N+M-1 \choose N-1}$, takes an exponential time to compute. However, we can use a convolution algorithm, which significantly reduces the complexity by developing an iterative algorithm.

Following the definition in Section \ref{sec:queue}, we  assume that there are $M$ vehicles and $N$ nodes in the system. For the case where $n_i$ vehicles at node $i \in N$, we define
\begin{align}
	k_i(n_i)=\frac{\lambda_i^{n_i}}{\prod_{k=1}^{n_i} u_i(k)}
\end{align}
Let $G_N(M)$ denote the normalizing constant of network with $M$ vehicles and $N$ nodes, then we have \cite{buzen1973computational}
\begin{align}
	G(M):=G_N(M)= k_1 * k_2 *\ldots * k_N (M)
\end{align}
where the convolution $k_1 * k_2 (m)$ of two functions $k_1$ and $k_2$ is defined by
\begin{align}
	k_1 * k_2 (m) = \sum_{i=0}^{m} k_1(i)k_2(m-i), \text{\space} m \geq 0
\end{align}
We can write the recursive relation in another way, for each $j=1,2,\ldots, N$, we have 
\begin{align}
	G_j(m)=k_j * G_{j-1} (m), \text{\space} 0 \leq m \leq M
\end{align}

Therefore, we can get the stationary probability distribution $P$ from \eqref{equ:p_n}, the throughput $\Lambda(M)$ from \eqref{equ:Lambda1}, and the marginal probability $p_i(n_i)$ for each node from \eqref{equ:marginal}.

\subsection{Mean Value Analysis}\label{sub:meanvalue}

As $M$ becomes larger, the convolution still takes a large amount of time, especially when the network has several load-dependent queues. If we are only interested in the average performance metrics,  we can utilize the Mean-Value analysis of closed queueing networks \cite{reiser1980mean} \cite{akyildiz1988mean} in order to compute the outcome easier. 

Let $D_i(M)$ denote the average system time (waiting time and charging time) of a passenger at node $i \in \mathcal{N}$. According to three different scheduling schemes (infinite, single, finite server), we have

\begin{align}	\label{equ:system}
	D_{i}(M)= \left\{
	\begin{aligned}
		&\frac{1}{u_i(1)}, \text{\space} i \in I\\
		&\frac{1+ L_i(M-1)}{u_i(1)},  \text{\space} i \in S\\
		&\frac{1+ L_i(M-1)+ s_i(M-1)}{v_i u_i(1)}, \text{\space} i\in F, 
	\end{aligned}
	\right.
\end{align}	
where $L_i(M-1)$ is the average number of vehicles (including the one in service) at node $i$, and $v_i$ is the number of chargers at finite-server nodes $F$.

We further define $s_i(M-1)$ as follows, which is the average number of idle chargers at node $i \in F$, then we have
\begin{align}\label{equ:idle}
	s_i(M-1)=\sum_{n_i=1}^{v_i-1} (v_i-n_i) p_i(n_i-1, M-1)%\\
	%=\sum_{n_i=0}^{v_i-2} (v_i-n_i-1) p_i(n_i, M-1)
\end{align}
where $p_i(n_i,M)$ is the marginal probability of $n_i$ vehicles at node $i \in F$ when the fleet size is $M$.

Therefore, we have the the system throughput as follows where $\lambda_i$ is defined in \eqref{equ:lambda}.
\begin{align}{\label{equ:throughput}}
	\Lambda(M)= \frac{M}{\sum_{i=1}^{N}\lambda_i D_i(M)}
\end{align}
Applying Little's Law, we can compute the queue length $L_i(M)$ by iteration without computing the normalization constant $G(M)$ as follows:
\begin{align}
	L_i(M)=\lambda_i \Lambda(M) D_i(M) %=***\frac{M \lambda_i D_i(M)}{\sum_{i=1}^{N}\lambda_i D_i(M)}
\end{align}

Therefore, we can iterate over $M$ to compute the throughput $\Lambda(M)$ from \eqref{equ:throughput}, where $D_i(M)$ only requires the information of last stage $L_i(M-1)$ and $s_i(M-1)$. There is one more expression $p_i(n_i,M)$ in \eqref{equ:idle} needing to specify.

In order to compute the marginal distribution $p_i(n_i, M)$ in \eqref{equ:idle}, for $i \in F$, we have to run another iteration with respect to $n_i$.

For $n_i=1,2,\ldots, M$, from local balance we have
\begin{align}
	p_i(n_i, M)=\frac{\lambda_i \Lambda(M)p_i(n_i-1,M-1)}{n_i u_i(1)}%=***\frac{\lambda_i M p_i(n_i-1,M-1)}{n_i u_i(1) \sum_{i=1}^{N} \lambda_i D_i(M)}
\end{align}
and
\begin{align}
	p_i(0,M)=1-\frac{1}{v_i}\Big(\frac{\lambda_i \Lambda(M)}{u_i(1)} 	+\sum_{n_i=1}^{v_i-1} (v_i-n_i)p_i(n_i,M)  \Big) 
	%=***1-\frac{1}{v_i}
	%\Big(\frac{M \lambda_i}{u_i(1) \sum_{i=1}^{N}\lambda_i D_i(M)}+\sum_{n_i=1}^{v_i-1} (v_i-n_i)p_i(n_i,M) \Big)
\end{align}

\if 0
The latter equation is from the equation of the average number of idle servers and both equations are used in the iteration of computing $D_i, i \in F$.
\fi

In this way, we can compute the throughput and availability faster than the convolution algorithm.

\subsection{General Passenger Inter-arrival Time Approximation}

In practice, passengers may not follow a Poisson arrival, that is,  the distribution of passenger inter-arrival time may not follow the exponential distribution. In this subsection, we show that we can approximate the passenger arrival process through a modification of the system equation in the Mean Value Analysis algorithm.

If the inter-arrival time is exponentially distributed, then it enjoys the memoryless property: We do not need to consider the remaining service time for vehicles at $i\in S$. If the inter-arrival time follows a general distribution with mean $\frac{1}{u_i(1)}$ and variance $\sigma^2$, we need to consider the remaining service time. Let $c_{\tau}^2=\sigma^2 u_i(1)^2$ denote the squared coefficient of variance of the inter-arrival time. Use $\gamma_i =\frac{\lambda_i}{u_i(1)}$ denote the relative utilization and the true utilization $\rho(M-1)=\gamma_i \Lambda(M-1)$ for the case of $M-1$ vehicles in the system. If the service time of single server nodes (SS) becomes general, then for each $i \in S$, the system time can be approximated by \cite[p. 291]{smith2018introduction}  and \cite[p. 253]{curry2010manufacturing} 
\begin{align}
	D_i(M)=\frac{1}{u_i(1)} &\Big(1+ L_i(M-1)-\rho(M-1)+
	\rho(M-1) \frac{1+c_{\tau}^2}{2} \Big), \text{\space} \forall i \in S
\end{align}

By replacing the second item in \eqref{equ:system} with the above equation, we can approximate the stationary distribution if the inter-arrival time has a general distribution. If $c_{\tau}^2=1$ holds, the above equations go back to the previous system time under exponential service time distribution in \eqref{equ:system}.

\section{Numerical Simulation}\label{sec:numeric}
    
    In this section, we run some large-scale simulations to validate our results. The first part develops a large-scale symmetric network and studies the asymptotic properties proved before. The second part focuses on the charger allocation on an asymmetric network capturing different characteristics of downtown and suburban areas. 
    The third part shows the effect of one fast charger and two slow chargers.
    
    \subsection{A Symmetric Network with 60 nodes}
    
Firstly, we consider one symmetric network with 60 stations with $p_{ij}= \frac{1}{59}$, i.e. after departure from one station, customers choose their destination equally between other stations. We further assume that one-third of EVs arriving at node $j$ decide to charge, while others choose to go to departure point directly without charging, i.e., $\bar{p_i}= \frac{1}{3}$, $\forall i \in F$. 
    
    In this network, we have 60 single server nodes (departure), 60 finite server nodes (charging) and 3540 infinite server nodes (travelling). From \eqref{equ:lambda}, we have the relative throughput as $\lambda_i=\frac{1}{420}$ for $ i \in F$, $\lambda_i=\frac{1}{140}$ for $ i \in S$ and $\lambda_i = \frac{1}{8260}$ for $i \in I$. The arrival rate of customers at each departure point $i\in S$ is $\alpha_i=10$ person per hour. The average time of travelling is  $T_{jl}=\frac{1}{3}$ hour per service (service rate: $3$ per hour), which follows a general distribution. The average time of charging is $t_i=0.5$ hour per charger $i \in F$ (service rate: $2$ per hour) and there are $v_i=2$ chargers at each charging station.
    
    \subsubsection{Fleet Sizing}
    
    We assume the average revenue per service is $z_i = \$30$ and the operating cost per vehicle is \$4 per hour $g(M)=4M$. Let $\epsilon_i= 20\%$ for $i \in S$, which means there is a minimal requirement of 80\% availability at each departure point. 
    
    Under above assumptions, the simulations at Figure \ref{fig:profit_m} and \ref{fig:avail_m} validate the concavity proved in Lemma \ref{lemma:concave} and Theorem \ref{thm:fleet} it also shows that the optimal fleet size is 763, with availability $87.2\%$.
    
\begin{figure}[!htb]
\centering
\scalebox{0.75}{
\includegraphics[width=\linewidth]{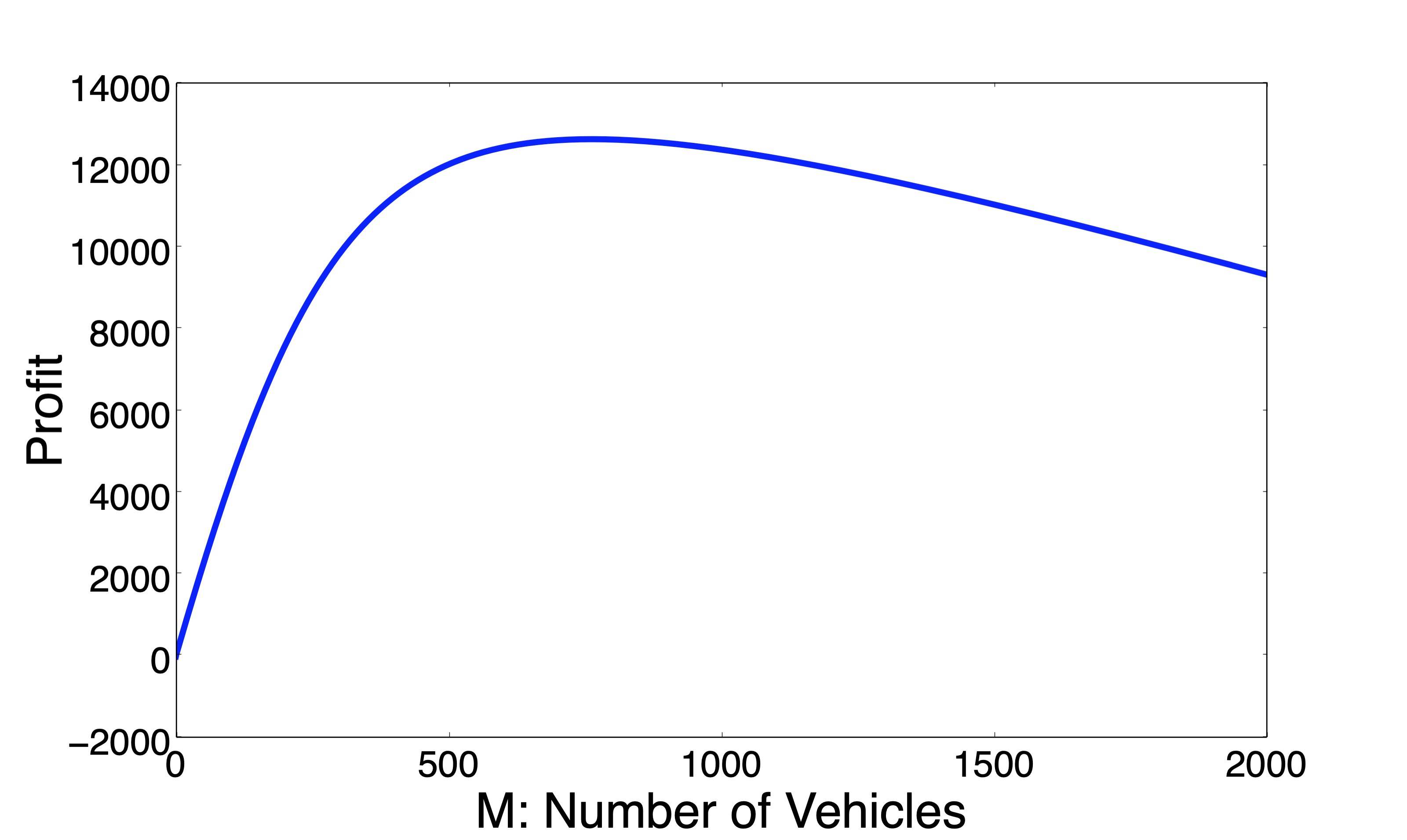}
}
\caption{Profit as a function of fleet size under 60 stations.}
\label{fig:profit_m}
\end{figure}

\begin{figure}[!htb]
\centering
\scalebox{0.75}{
\includegraphics[width=\linewidth]{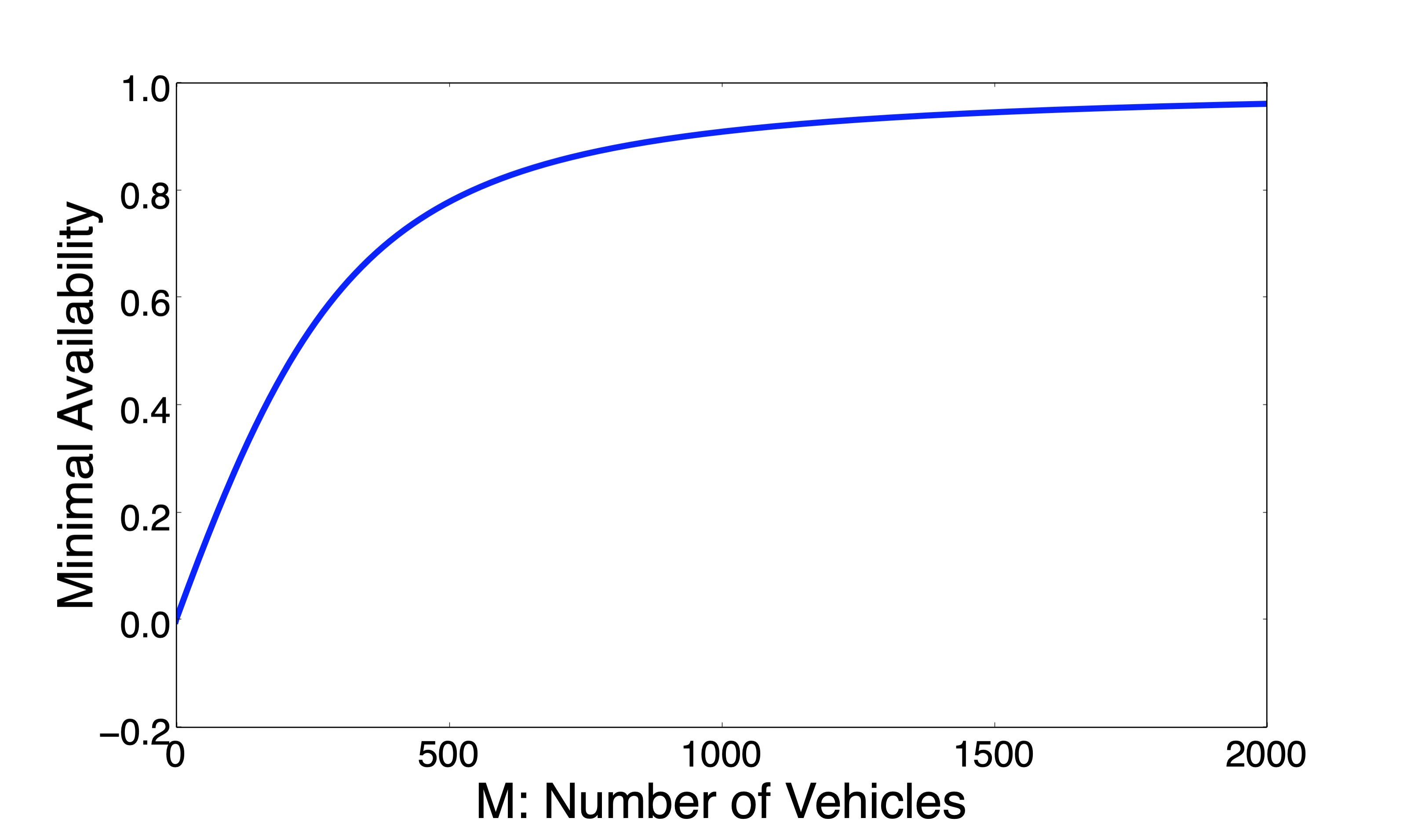}
}
\caption{Availability at departure points as a function of fleet size under 60 stations.}
\label{fig:avail_m}
\end{figure}

If the number of chargers is one at each station, the availability requirement is not satisfied (only $54.47\%$), thus we start from two chargers per station. As shown in Figure \ref{fig:opti_m_avail}, the optimal fleet size increases as the availability requirements increases. On the other hand, the optimal fleet size decreases as more chargers are provided at each charging point, as shown in Figure \ref{opti_m_num_charger}, and the curve remains stable when chargers are relatively high. This is because more chargers will decrease the waiting time before charging and provide more availability as less time is spent at charging points. 

% To be extreme, if there are infinitely many chargers and charging only takes very little time, we will only need fewer electric vehicles to serve the demand in the system. 
    
\begin{figure}[!htb]
\centering
\scalebox{0.75}{
\includegraphics[width=\linewidth]{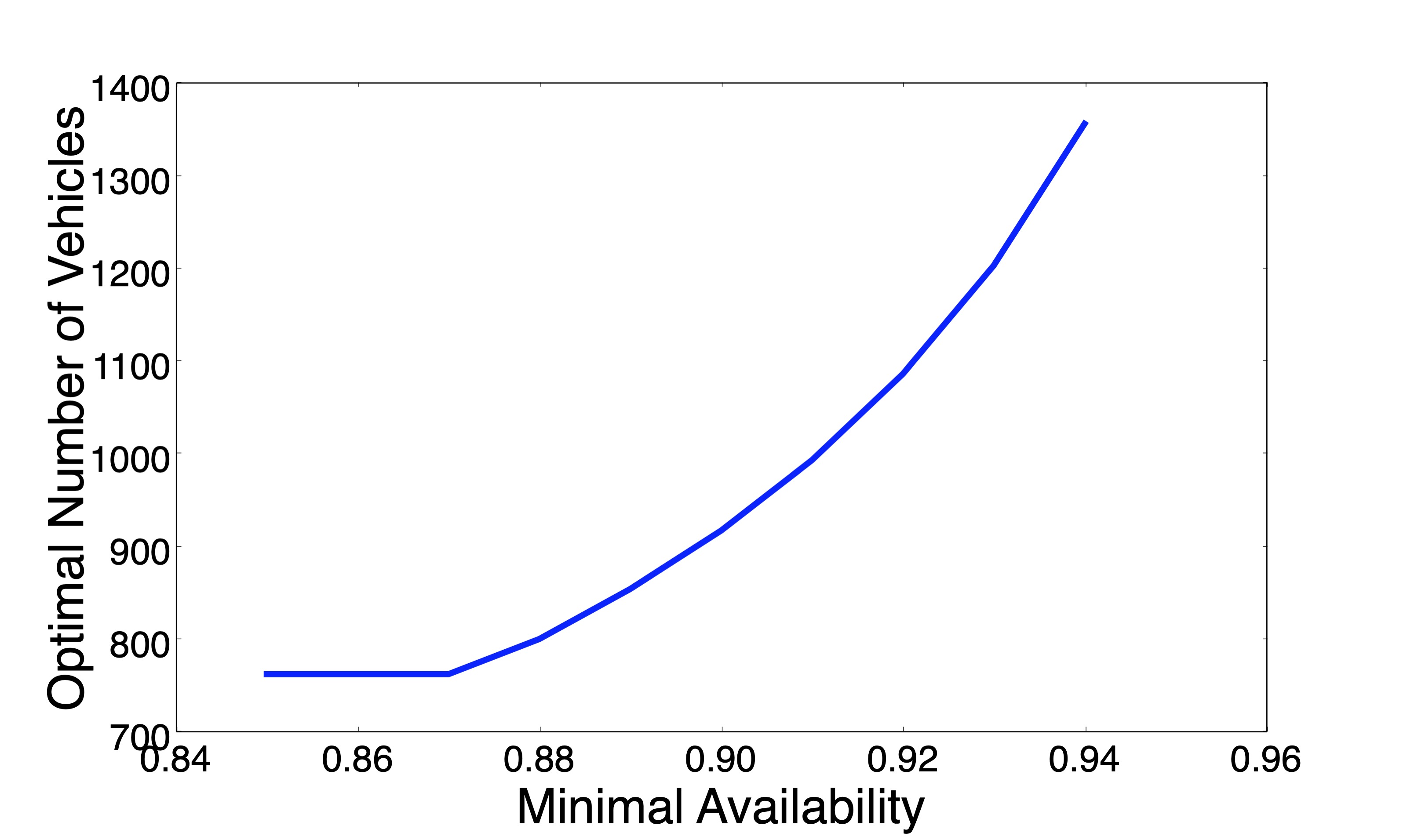}
}
\caption{Optimal fleet size increases dramatically with higher availability requirements.}
\label{fig:opti_m_avail}
\end{figure}

\begin{figure}[!htb]
\centering
\scalebox{0.75}{
\includegraphics[width=\linewidth]{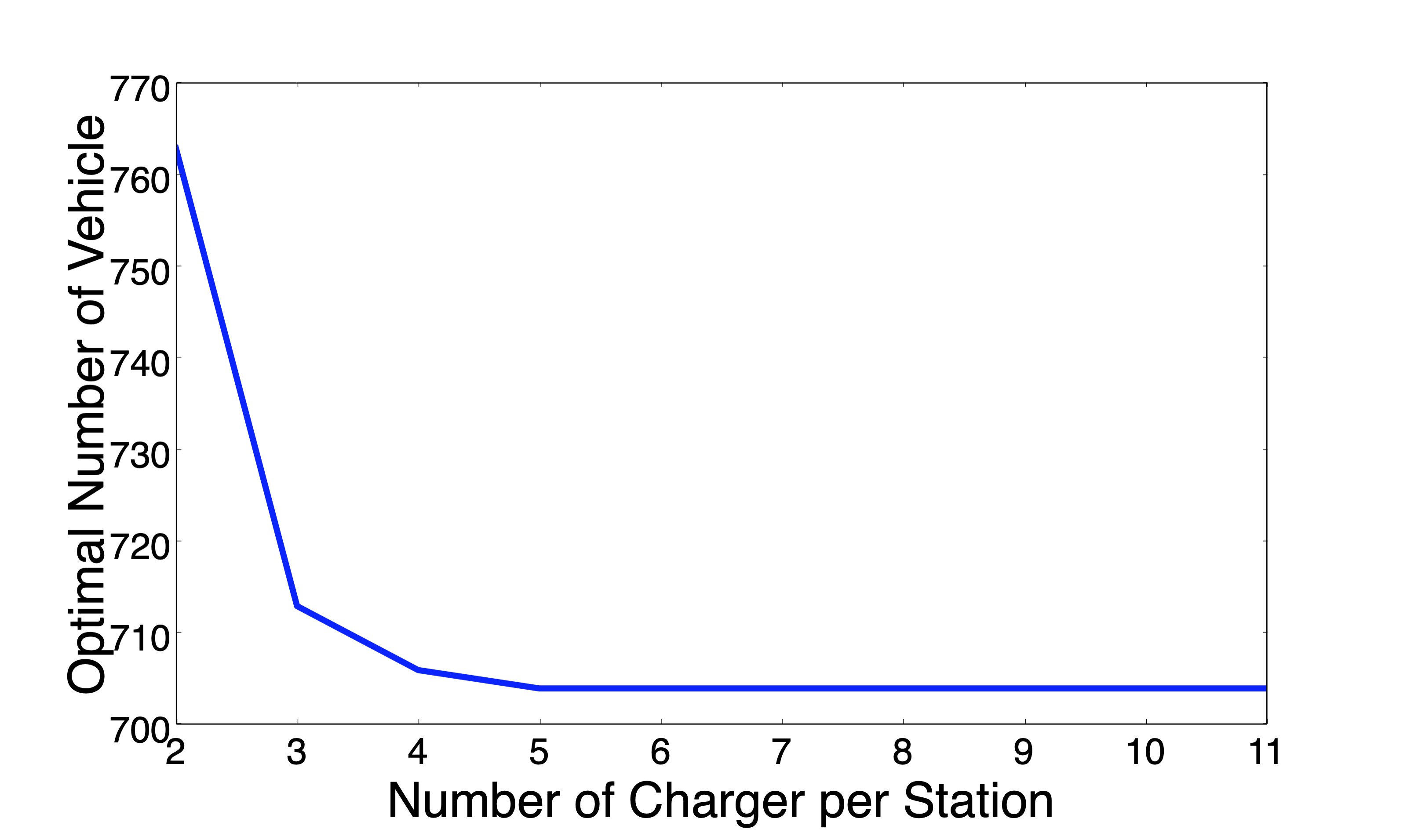}
}
\caption{Optimal fleet size decreases dramatically as number of chargers increases and it keeps flat when chargers are relatively large.}
\label{opti_m_num_charger}
\end{figure}

\subsubsection{Charger Allocation}
% As shown above, the number of chargers have a great influence on the system performances. In this subsection, we want to our theoretical findings proved in Section \ref{sec:charge}.
We now study the effect of number of chargers on the throughput, availability, and profit numerically for the example introduced above. We fix the fleet size to be $M=763$. Other parameters remain the same as in the last subsection such as transition probability, charging probability, arrival rate, charging time, travel time, and revenue per service. Suppose that the operating cost per charger is $c_j=2$ for all $j \in F$. The penalty for one passenger loss is $\beta_k=1$ for all $ k \in S$.

As established in Theorem \ref{thm:concavity}, the profit is a concave function of the number of chargers. This can be seen in Figure \ref{fig:profit_num_charger}. Moreover, we observe that the profit maximizing point is to have 3 chargers per station for the numerical example considered here.

\begin{figure}[!htb]
\centering
\scalebox{0.75}{
\includegraphics[width=\linewidth]{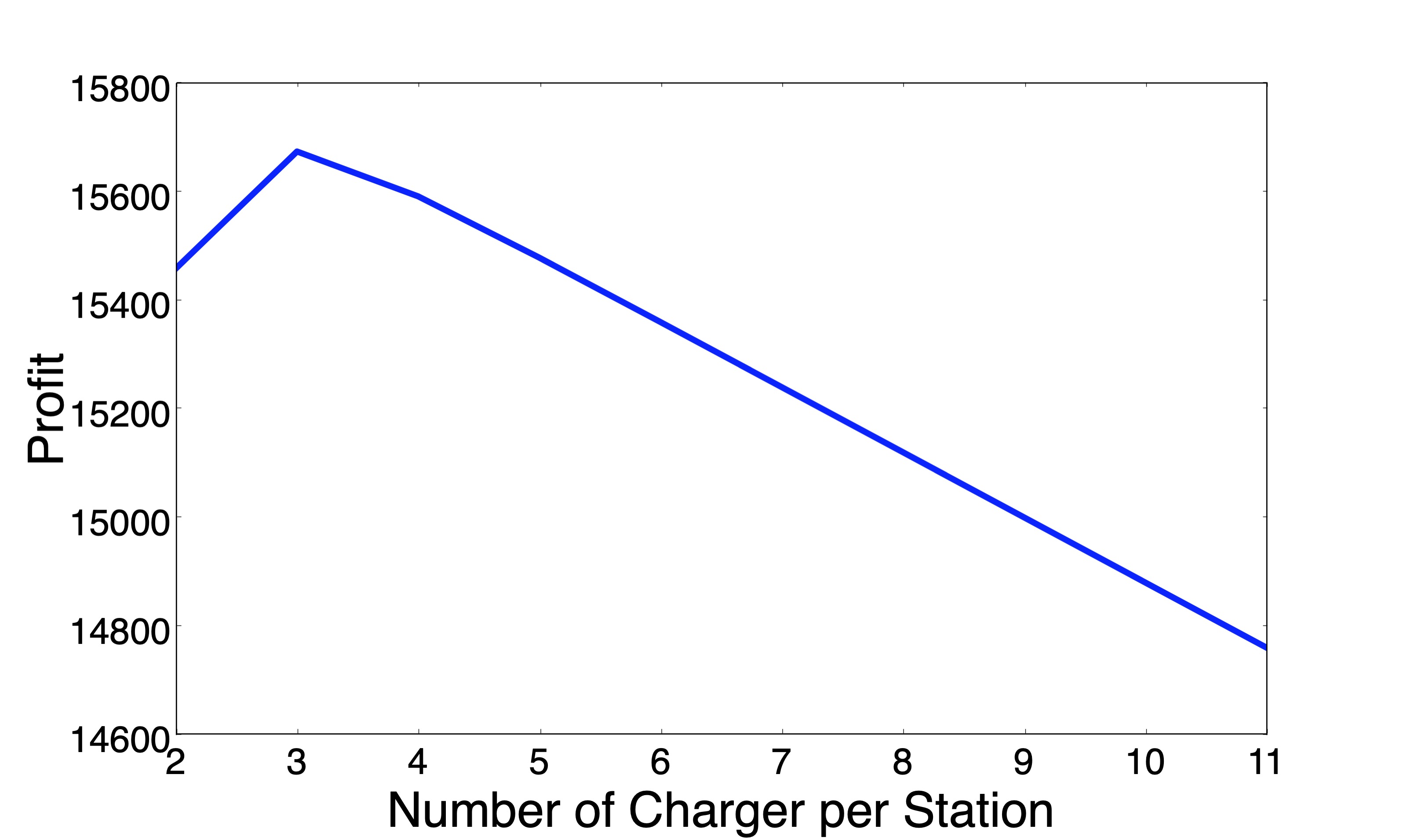}
}
\caption{Profit is a concave function of number of charger per station under charger operating costs and fixed fleet size.}
\label{fig:profit_num_charger}
\end{figure}

Figure \ref{fig:throu_num_charger} depicts the system throughput as a function of the number of chargers at each node. We observe that the system throughput increases as we increase the number of chargers. This is because the vehicles spend less time waiting for charging at charging points. When the number of chargers at each node is larger than a threshold, then the system throughput does not increase as we increase the number of chargers. For the current numerical example, this threshold is 4 charger per station.

\begin{figure}[!bth]
\centering
\scalebox{0.75}{
\includegraphics[width=\linewidth]{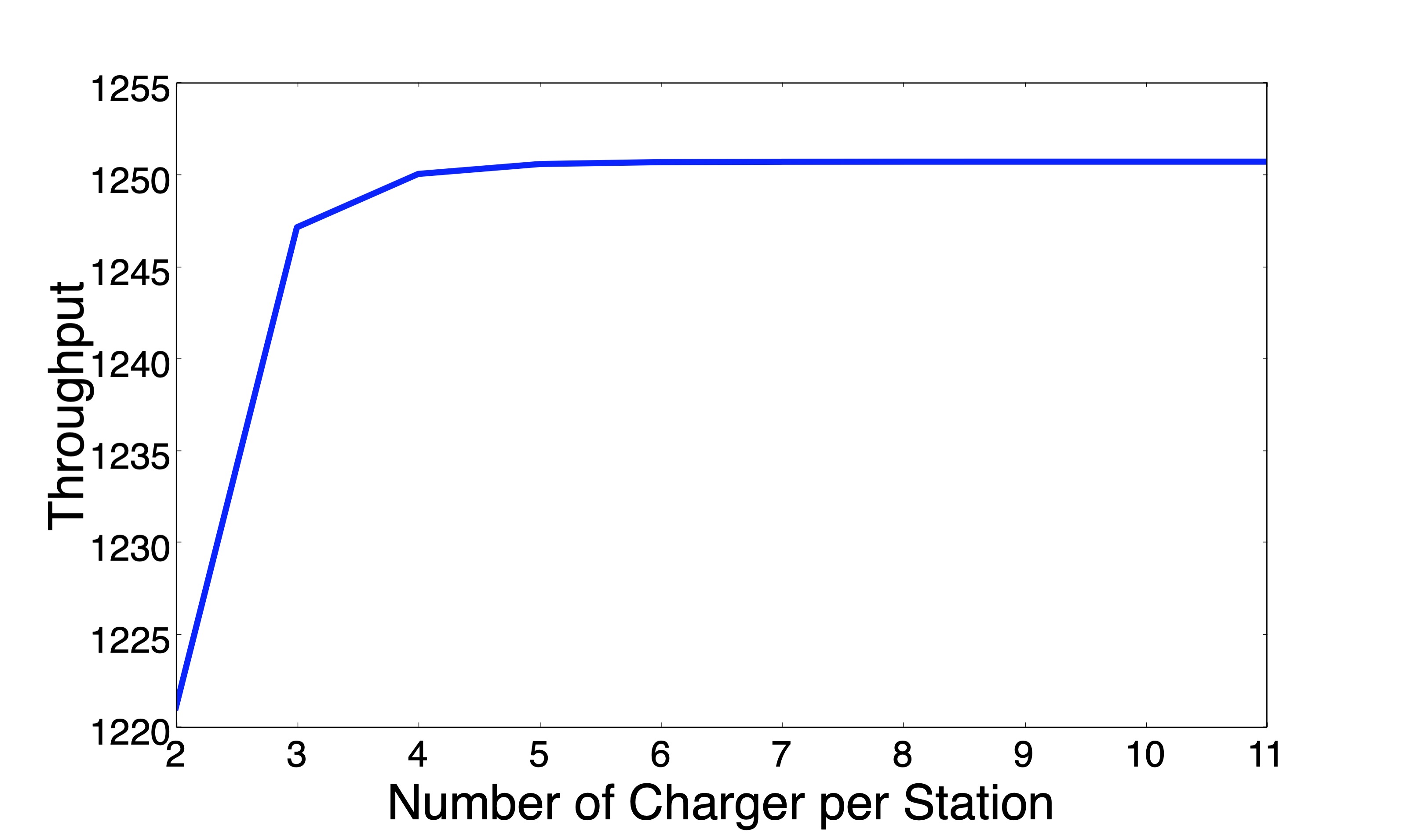}}
\caption{Throughput increases as the increase of number of charger.}
\label{fig:throu_num_charger}
\end{figure}

We now focus on the availability at one departure point as shown in Figure \ref{fig:avail_num_charger}. As more chargers become available, electric vehicles spend less time at charging points and more time at departure points. Therefore, the availability increases at single server queues.

\begin{figure}[!htb]
\centering
\scalebox{0.75}{
\includegraphics[width=\linewidth]{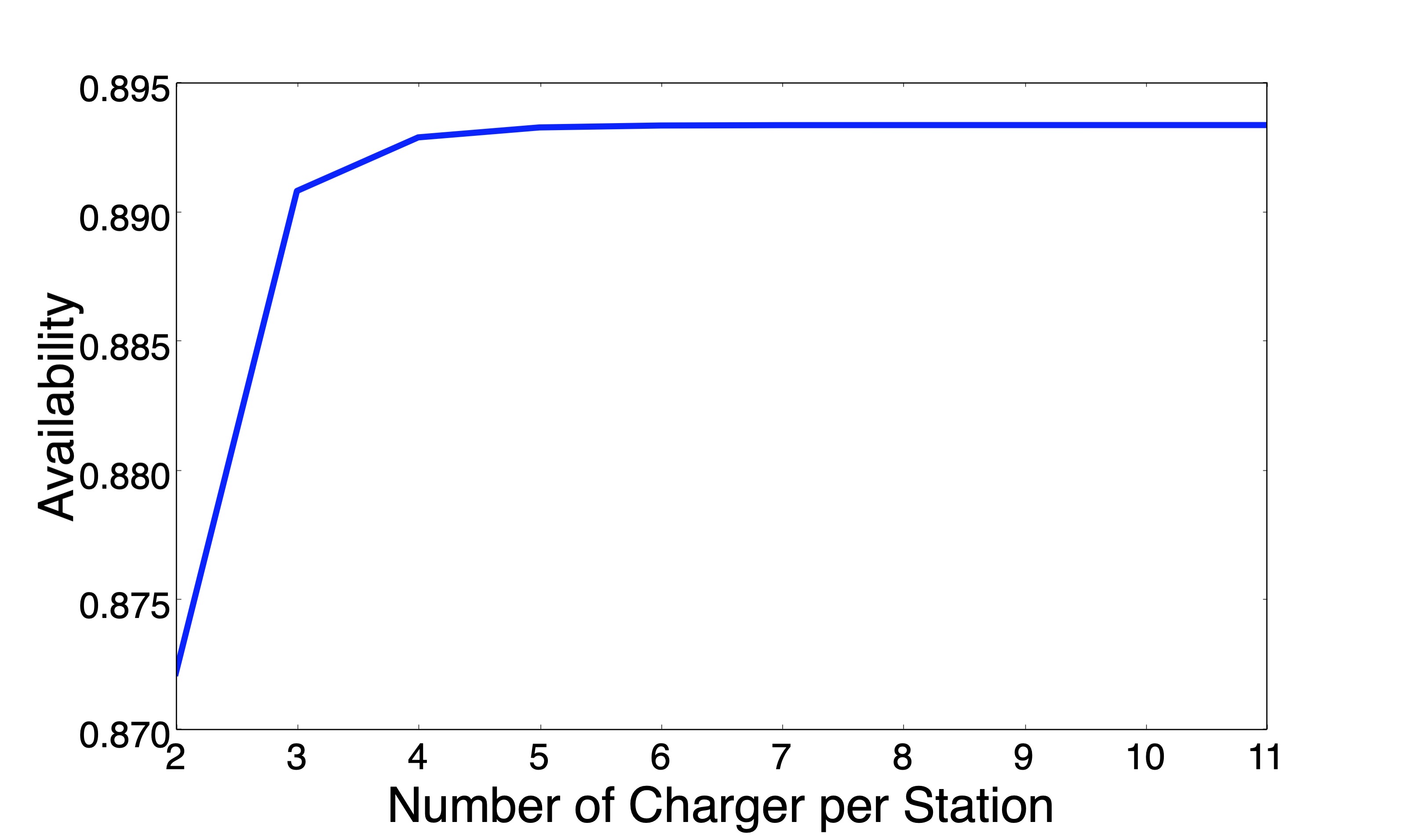}
}
\caption{Availability of vehicle increases if more chargers are installed at charging points}
\label{fig:avail_num_charger}
\end{figure}

Leveraging the Mean Value Analysis in Subsection \ref{sub:meanvalue}, the average queue length of each node (average number of vehicles at this node ) is depicted in Figure \ref{fig:qlength_num_charger}. As the number of chargers increases, the delay at charging points (Finite Server nodes) decreases because of less waiting time, and the vehicles move to the departure points (Single Server nodes), which increases their availability.

\begin{figure}[!htb]
\centering
\scalebox{0.75}{
\includegraphics[width=\linewidth]{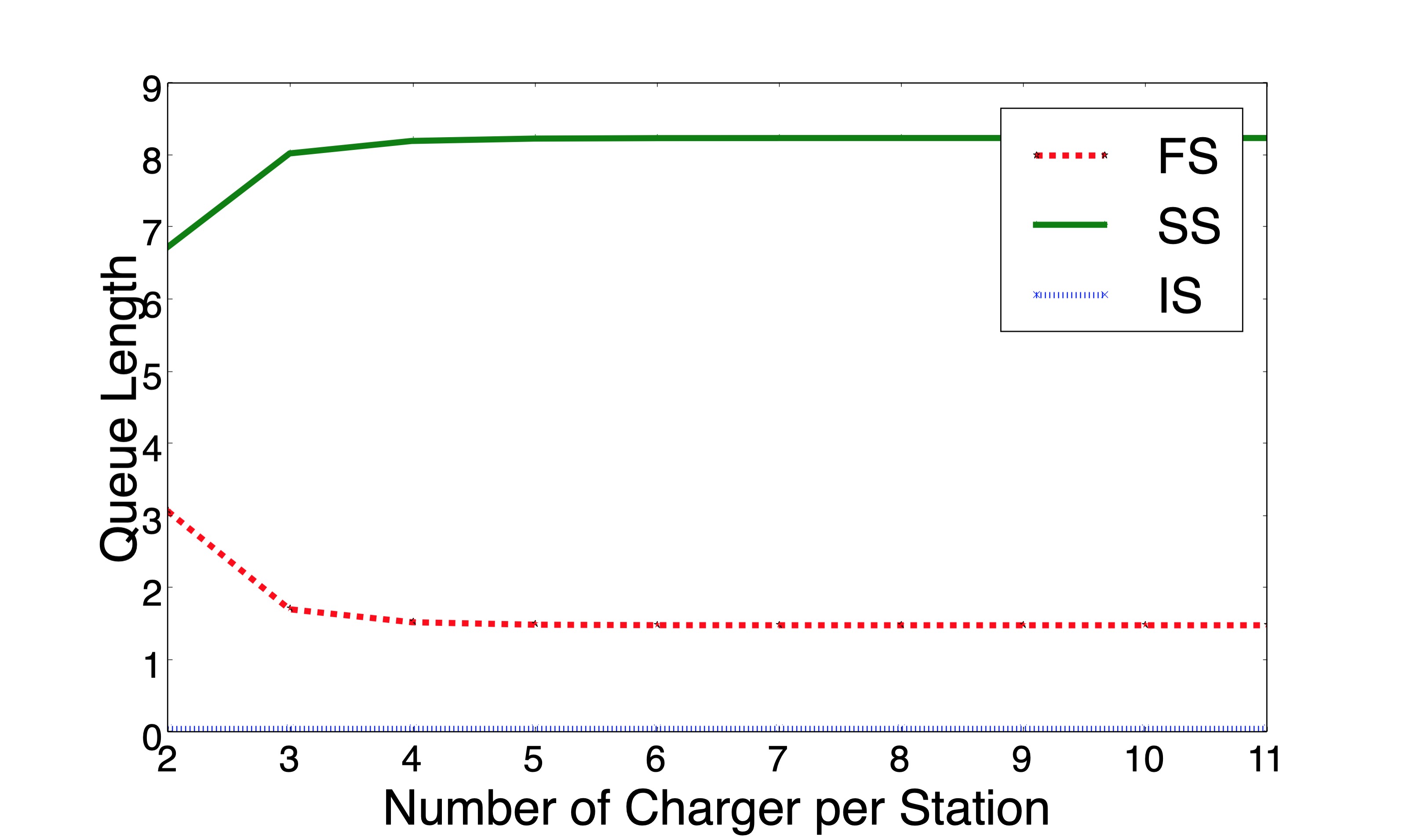}
}
\caption{Average number of vehicles at finite server nodes (FS:charging), single server nodes (SS:departure) and infinite server nodes (IS;travel) changes, with respect to number of chargers.}
\label{fig:qlength_num_charger}
\end{figure}

\subsection{Asymmetric Network Capturing the Relation between Downtown and Suburban Areas}

In this subsection, we want to show the charger allocation process when the network is not symmetric. We use a  three-station network in order to show it clearly and precisely, where the first station denotes the downtown while second and third station denotes suburban areas.

As shown in Figure \ref{fig:matrix}, we have the routing matrix $r_{ij}$ from \eqref{equ:r_ij}, where the orange part denotes that every vehicle will enter the departure point after finishing charging. The green part denotes the asymmetric transition probability $p_{ij}$ between stations, i.e., passengers departing from station 1 (downtown) will choose their destination equally between station 2 and 3 (suburb), while passengers departing from station 2 or 3 (suburb) will choose station 1 (downtown) as their destination with probability 60\% while station 3 or 2 (suburb) with probability 40\%. The blue part denotes that statistically one-third of EVs arriving at node $i \in F$ decide to be charged, while others choose to go to departure point directly without charging, i.e., $\bar{p_i}= \frac{1}{3}$, $\forall i \in F$. Other parts in this routing matrix are zero.
    
    \begin{figure}[!htb]
    \centering
    \scalebox{0.9}{
    \includegraphics[width=\linewidth]{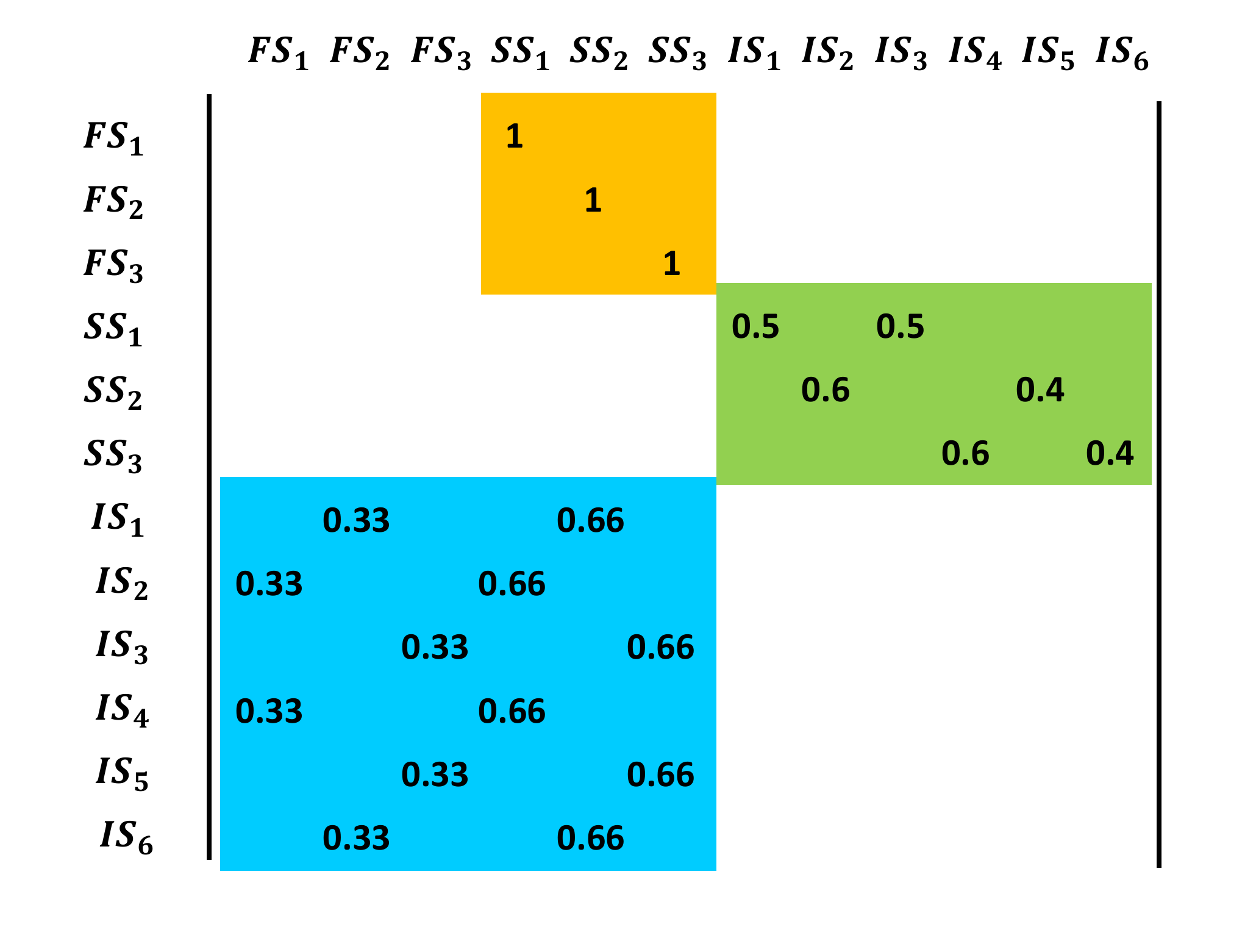}
    }
    \caption{Routing matrix $r_{ij}$ between nodes under asymmetric three-station network, where station 1 denotes downtown and station 2\&3 denote suburban areas.}
    \label{fig:matrix}
    \end{figure}
    
According to global balance \eqref{equ:lambda}, we have the relative throughput as follows, where the order of nodes is the same as the routing matrix.
    \begin{align}{\label{equ:visit_ratio}}
    \lambda =\Bigg[ \frac{3}{56},\frac{5}{112},\frac{5}{112},\frac{9}{56}, \frac{15}{112},\frac{15}{112}, \frac{9}{112},\frac{9}{112},\frac{9}{112},\frac{9}{112},\frac{3}{56},\frac{3}{56} \Bigg]
    \end{align}
    
\subsubsection{Charger Allocation} 
With a fixed number of vehicles $40$ in the system, we follow the Algorithm \ref{alg:charger} in order to find the optimal charger allocation in this network. 
    
    Distinguishing the difference of rent between downtown and suburb, we assume that the operating cost per charger is \$2 per hour at station 2 and 3 (suburb) while \$4 per hour at station 1 (downtown). We further assume the average revenue per service is \$30 and the penalty is \$ 1 per loss.
    
\begin{table}[h!]
\caption{Charger Allocation Algorithm}
\label{tab:charger}
\centering
\begin{tabular}{l l l l l l}
\hline
Step &$V$ & Profit & Revenue & Cost & Penalty \\
\hline
1 &(1,1,1) &458.16 &478.25 &8 &12.09\\
\space &(1,2,1) &457.53 &479.56 &10 &12.03\\
2 &(2,1,1) &533.58 &554.79 &12 &9.21\\
\space &(3,1,1) &530.05 &555.23 &16 &9.18\\
3 &(2,2,1) &553.46&575.89 &14 &8.43\\
\space &(3,2,1) &549.53 &575.93 &18 &8.40\\
4 &(2,2,2) &766.58 &783.21 &16 &0.63\\
\space &(2,3,2) &766.98 &785.55 &18 &0.57\\
5$^*$ &(3,2,2) &769.61 &790.00 &20 &0.39\\
\space &(3,3,2) &769.52 &791.85 &22 &0.33\\
\space &(4,2,2) &766.15 &790.51 &24 &0.36\\
% \hline
% Extra &(2,3,3) &767.21 &787.69 &20 &0.48\\
\hline
\end{tabular}
\label{Tab:Charger Allocation}
\end{table}

Following the Charger Allocation Algorithm (Algorithm \ref{alg:charger}), we can compute the allocation solution without listing all the candidates. As shown in table \ref{Tab:Charger Allocation},
the algorithm reduces the candidate size and computing complexity by leveraging the concavity property. After step 5, the algorithm terminates and we can claim that $(3,2,2)$ is the optimal solution in our setting, i.e., the number of chargers at (FS1,FS2,FS3) is $(3,2,2)$, without analyzing any other candidates.

\if 0
We further define a measure,  visit ratio per dollar operating cost (VRPD), to measure the efficiency of money spent on chargers with respect to visit ratios. For downtown (FS1), with visit ratio $\frac{3}{56}$ from \eqref{equ:visit_ratio} and operating cost \$4, its VRPD is $\frac{3}{224}$, while for suburban areas (FS2) with visit ratio $\frac{5}{112}$ and operating cost \$2, its VRPD is $\frac{5}{224}$. The result shows that, although the VRPD for suburban areas is high, relatively fewer chargers are put there, which counter-intuitively shows that the number of chargers may not increase as  VRPD, or VRPD is not a good measure for charger allocation.
\fi

If the constraint is active, i.e., $V$ is upper bounded by $\hat{V}$ as in \eqref{equ:upperbound}, this algorithm can still work. For example, if $\hat{V}=(2,5,5)$ is the upper bound of $V$, then the algorithm terminates at $(2,3,3)$, indicating this is an approximate solution to the optimization under this constraint.

\subsubsection{Convolution}    
Under the fleet size $M=40$ and $V=(3,2,2)$, we can use convolution algorithm in Subsection \ref{sub:convolution} and \eqref{equ:marginal} to find the distribution of electric vehicles,  through the marginal distribution of each node, i.e., the probability of $n_i$ vehicles at node $i \in \mathcal{N}$ in the closed queueing system. 

For example, the following Figures \ref{fig:marginal_SS2} and \ref{fig:marginal_FS2} present the marginal distribution in the departure and charging points of suburban areas. The first figure shows that there is no vehicle waiting at SS2 with a probability 18\%, which means that the availability at SS2 is 82\%. The second figure shows that there is no vehicle charging at FS2 with probability 18\%, which indicates that the utilization of charging infrastructure is high.

\begin{figure}[!htb]
\centering
\scalebox{0.75}{
  \includegraphics[width=\linewidth]{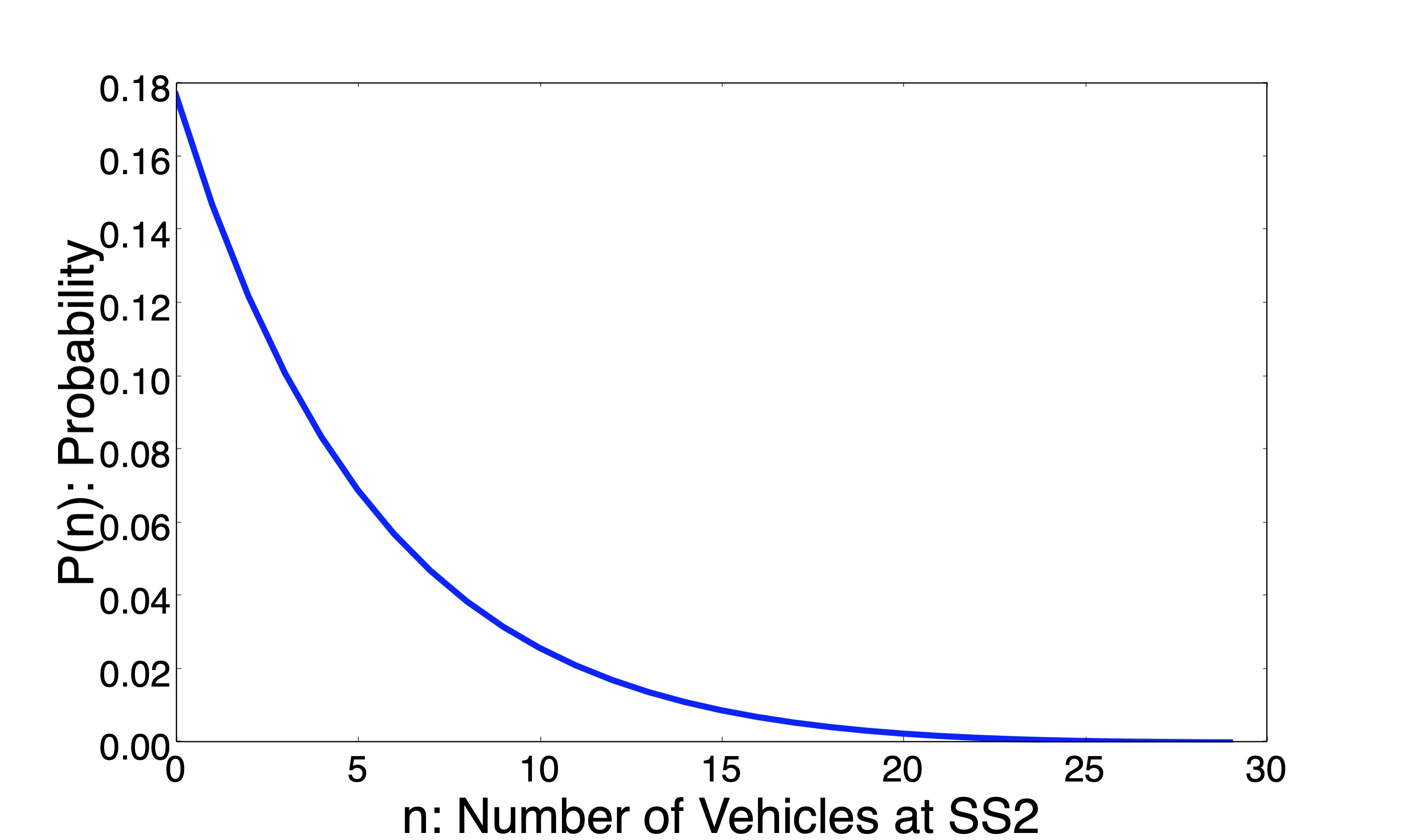}
}
\caption{Marginal Distribution of vehicles at SS2 (suburb-departure) under $M=40$ and $v=\{3,2,2\}$}
\label{fig:marginal_SS2}
\end{figure}

\begin{figure}[!htb]
\centering
\scalebox{0.75}{
\includegraphics[width=\linewidth]{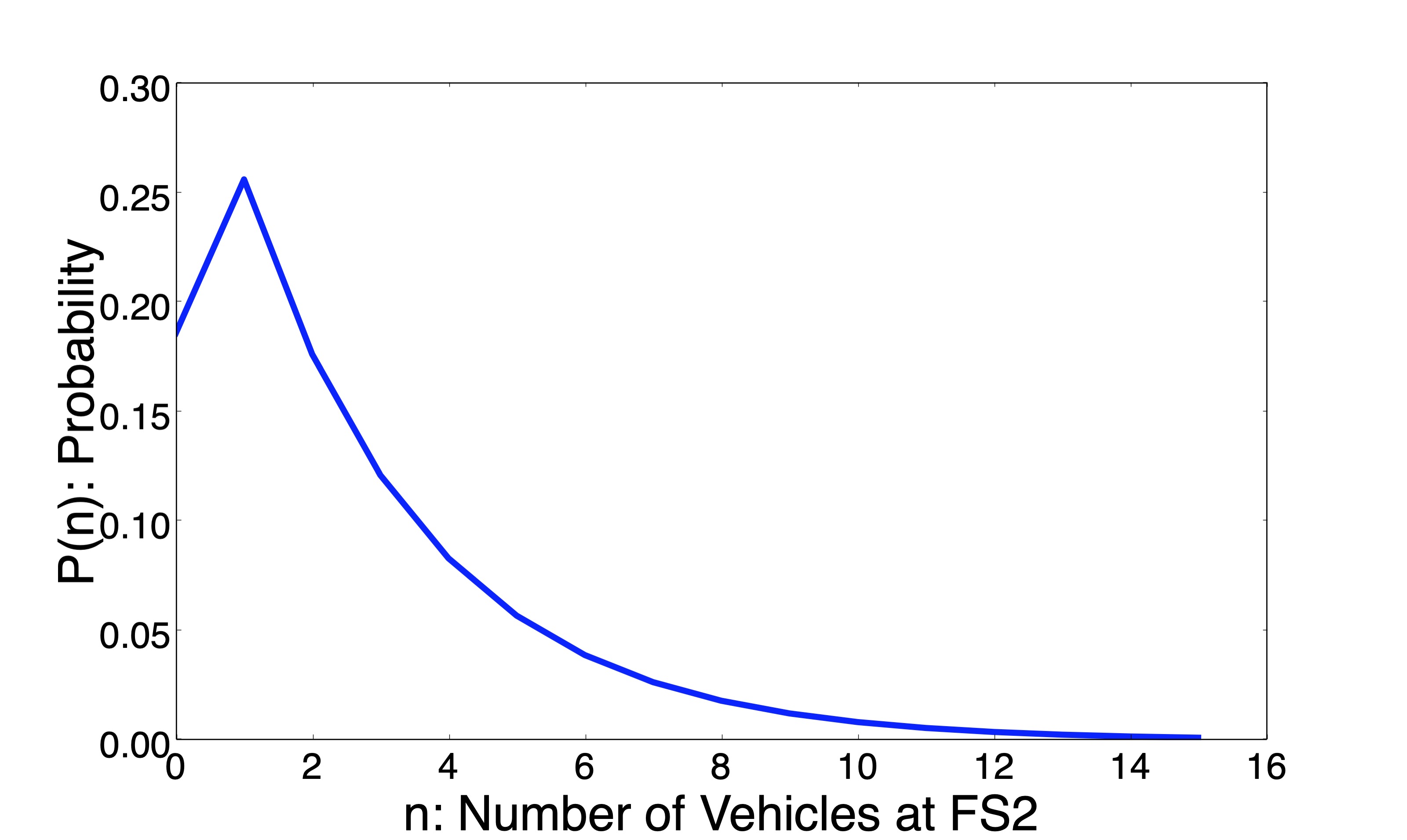}
}
\caption{Marginal Distribution of vehicles at FS2 (suburban charging station) under $M=40$ and $v=\{3,2,2\}$}
\label{fig:marginal_FS2}
\end{figure}

\subsection{One fast vs. Two slow} \label{sub:onefast}   
In this section, we show the result in Section \ref{sec:select} about one fast server vs. two slow servers. We test them in a simple closed queuing network of two queues, where the output of one queue is the input of another queue. The first queue is a single server queue with exponential service time, whose mean service time is $1/2$. The second queue has two choices, as shown in \ref{sec:select} with $t_0=1/2$. We compare the system throughput of two choices. 

Assume that there are only $10$ vehicles in this closed queueing network, we use Monte-Carlo simulation to find the system throughput under gamma distribution. 

\begin{figure}[!htb] 
\centering
\scalebox{0.75}{
\includegraphics[width=\linewidth]{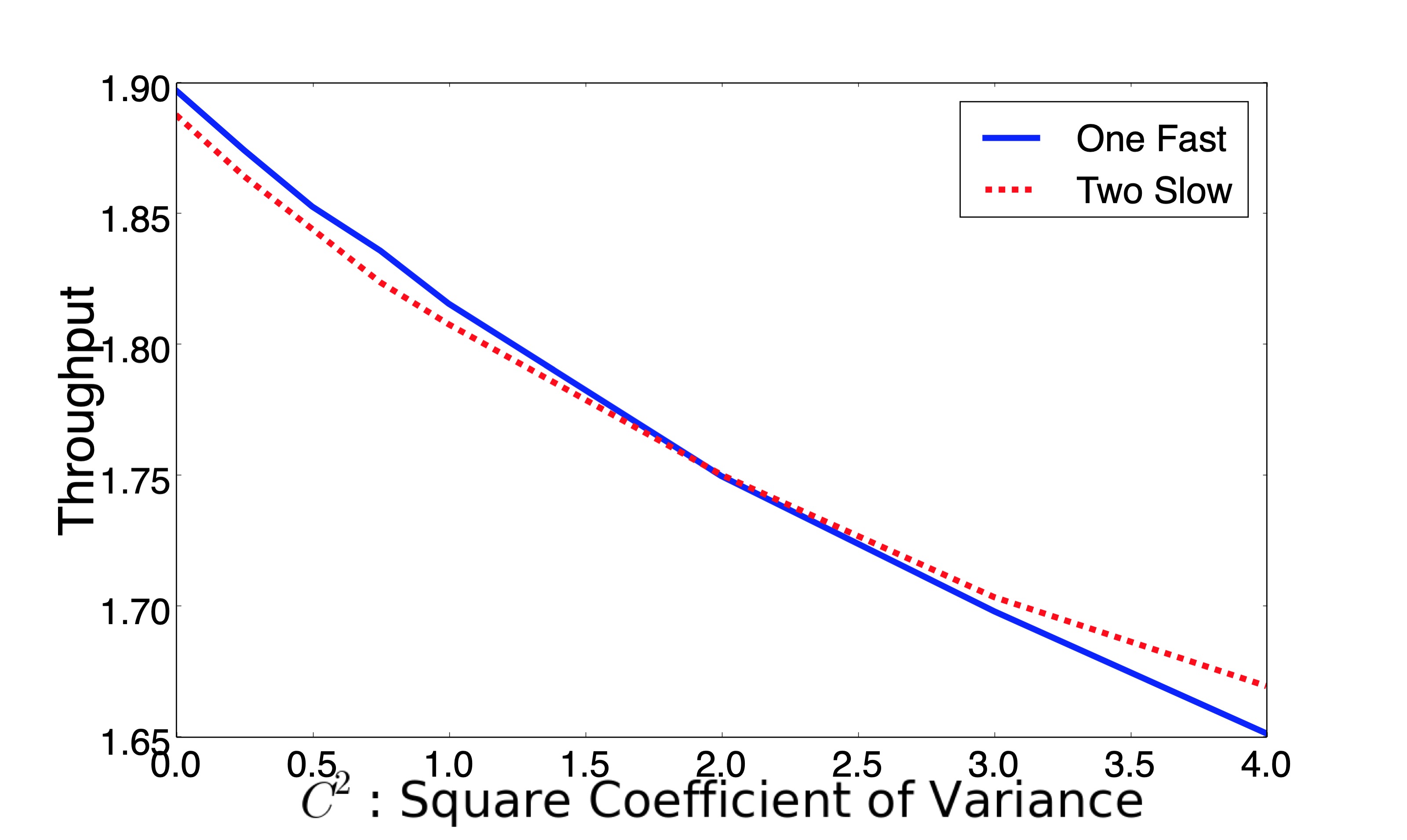}
}
\caption{Throughput comparison of one fast charger and two slow chargers under gamma distributed charging time in a closed queueing system.}
\label{fig:one_fast}
\end{figure}

As shown in Figure \ref{fig:one_fast}, when squared coefficient of variance $c^2$ is small, $D_1<D_2$ and thus one fast server has larger throughput. As $c^2$ increases, the gap between two choices closes. When $c^2$ becomes large, two slow servers outperform one fast server, i.e.,  $D_1 > D_2$ and the throughput under two slow servers is larger.     

If we keep increasing the number of chargers, the curve will be flatter. If the number of chargers exceeds the number of vehicles in the system, the throughput does not change with respect to the squared coefficient of variance, as shown in Lemma \ref{lemma:finite_insen}.

\begin{figure}[!htb] 
	\centering
	\scalebox{1}{
		\includegraphics[width=\linewidth]{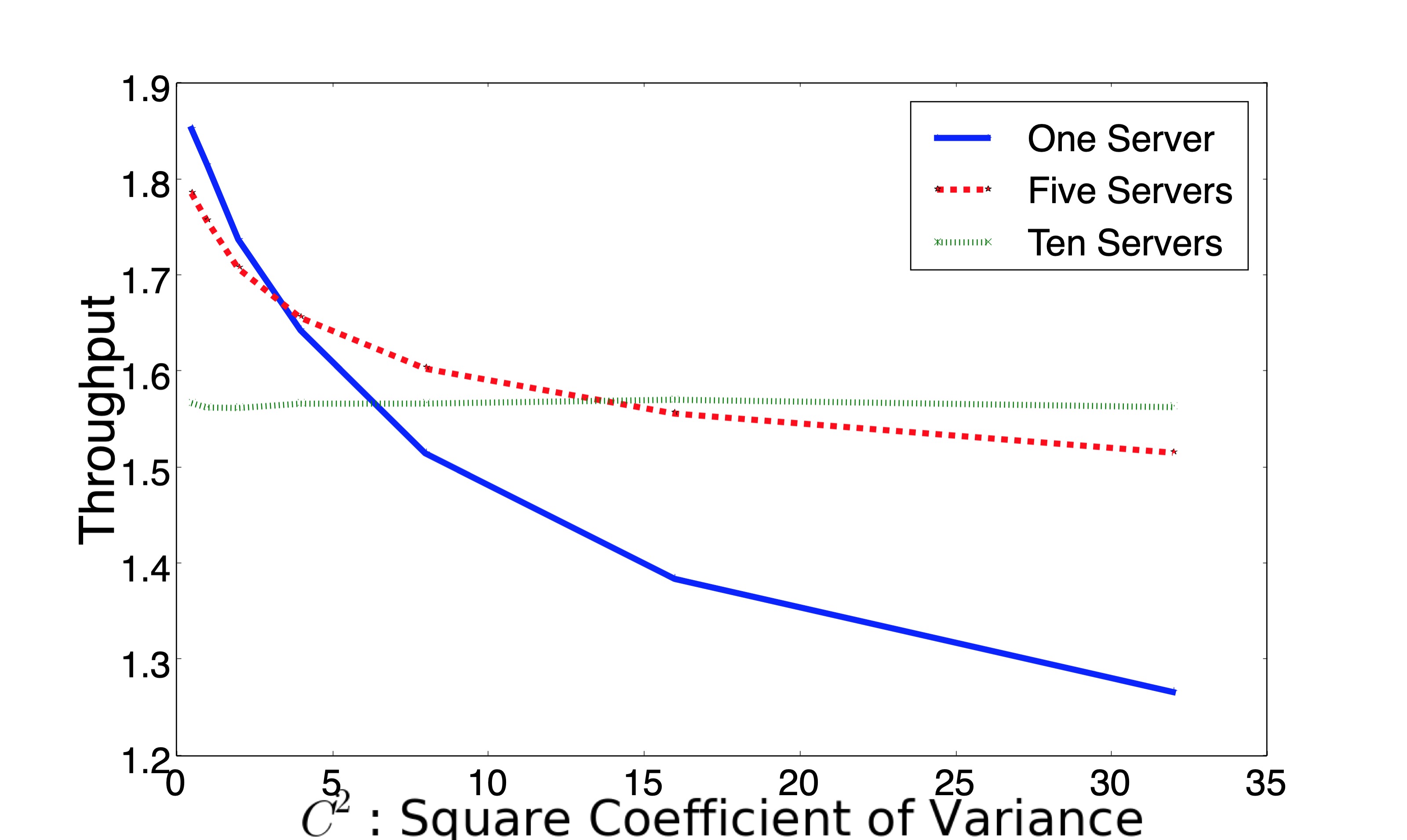}
	}
	\caption{Throughput comparison of one, five and ten servers under Gamma distributed charging time }
	\label{fig:Monte Carlo_gamma}
\end{figure}

As shown in Figure \ref{fig:Monte Carlo_gamma}, the threshold of $c^2$ where one server outperform five servers is around 4, which is larger than 1.9, the threshold where one server outperform two servers in Figure \ref{fig:one_fast}. When number of server is 10, which is the same as number of vehicles in the system, the throughput does not change with respect to the variance of charging time. Figure \ref{fig:Monte Carlo_IG} shows that similar results hold for Inverse Gausian distributed charging time and the threshold for $c^2$ is distribution dependent.  

\begin{figure}[!htb] 
	\centering
	\scalebox{1}{
		\includegraphics[width=\linewidth]{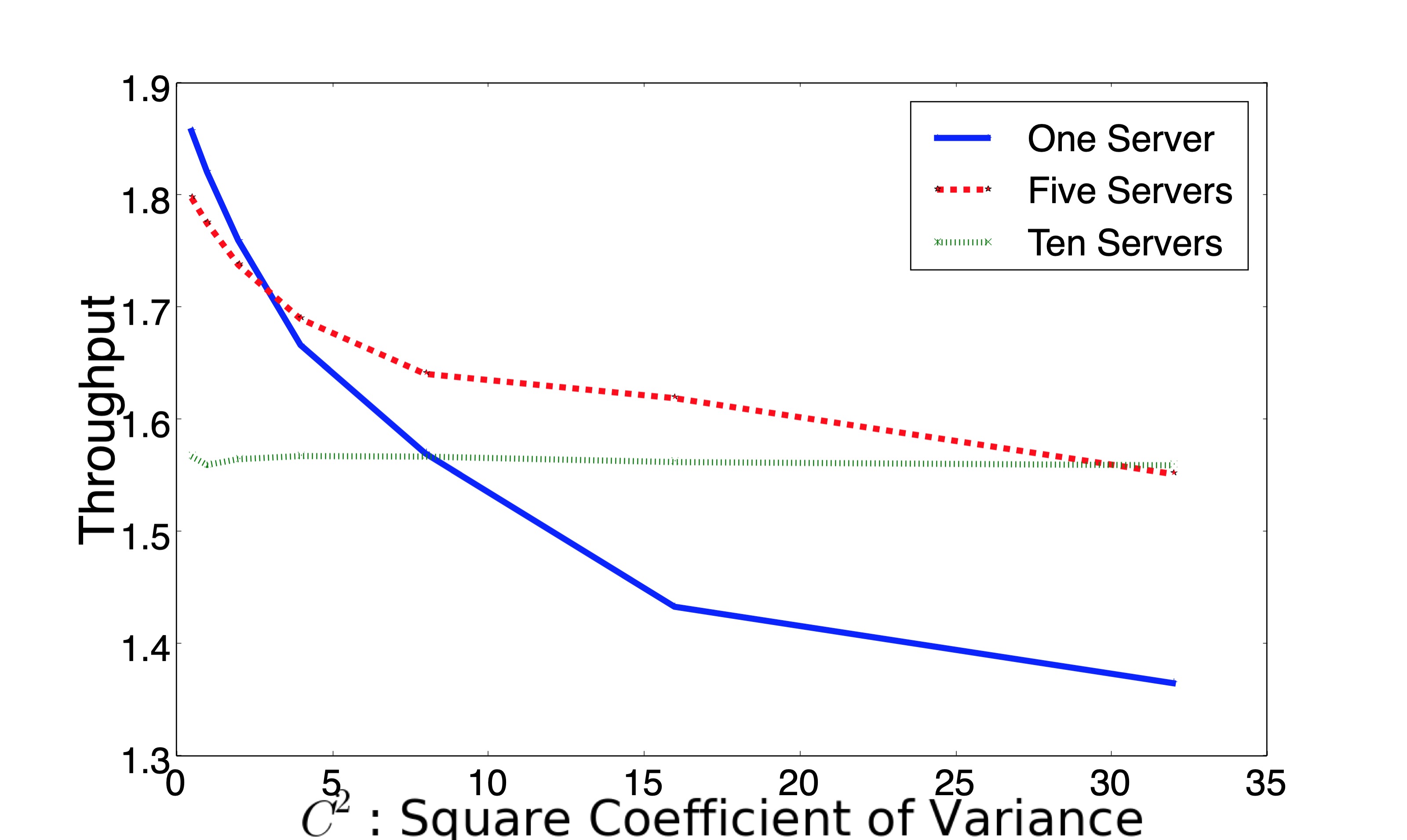}
	}
	\caption{Throughput comparison of one fast server and two slow servers under Inverse Gaussian distributed charging time }
	\label{fig:Monte Carlo_IG}
\end{figure}

\section{Conclusion}\label{sec:conclusion}
In this paper, we developed a closed queueing model for modeling a fleet of electric vehicles providing transportation service in a city. We considered the fleet sizing problem to maximize the profit of the system and the number of charges allocated within each charging station to maximize the total operational cost. We proved that the two problems lead to convex integer optimization problems. We developed a greedy algorithm for charger allocation and established its optimality if there are only two charging stations. When the variance of charging time becomes larger than the mean charging time, we showed using a stylized example that two slow chargers outperform one fast charger in terms of the total delay (waiting time plus the charging time). We further developed an approximation method for general passenger inter-arrival time distributions and the mean value analysis algorithm is provided for the performance analysis of the overall system.

Through this analysis, we gained many insights about fleet sizing, charger allocation, and charger selection. 
\begin{enumerate}
    \item As shown in Figure \ref{opti_m_num_charger}, the optimal fleet size can be reduced by adding more chargers. 
    \item Theorem \ref{thm:concavity} shows that adding chargers at any charging point will increase the system throughput and the availability of any departure point in the system.
    \item We posit that chargers should be allocated to the charging points that can bring high system throughput increment with a low cost, which is usually the areas with high visit ratios. This idea is inspired by the marginal allocation scheme studied in operations research \cite{fox1966discrete}. 
    \item  Fast chargers may be replaced by multiple slow chargers, if the standard variance of charging time is relatively large compared with mean charging time. We show this insight to be useful through a numerical simulation Subsection \ref{sub:onefast} (see Figure \ref{fig:one_fast}).
\end{enumerate}
Future research will address the rebalancing policy, more general charging time distributions and state-dependent routing strategies. 

\section*{Acknowledgment}
This research is supported by Ford Motor Company under the University Alliance Project.

\section*{References}
\bibliography{sample_electric}

\appendix
\renewcommand*{\thesection}{\appendixname~\Alph{section}}

\section{Proof of Lemma \ref{lemma:BCMP}}
\begin{enumerate}
	\item For a product form closed queueing network with $M$ vehicles, equation \eqref{equ:Lambda2} and \eqref{equ:Lambda1} are given  by \cite[p.27]{serfozo2012introduction}   where $\lambda_i$ is the visit ratio defined in \eqref{equ:lambda} and $G(M)$ is the normalization factor defined in \eqref{equ:p_n} and \eqref{equ:gm}.
	
	\item 
	The idea comes from \cite{george2011fleet} and we rewrite the equations with our notations as follows. 
	The marginal distribution, i.e., the probability of $n_i$ vehicles at node $i \in \mathcal{N}$ is:
	\begin{align}{\label{equ:marginal}}
		p_i(n_i)=\frac{\lambda_i ^{n_i}}{\prod_{k=1}^{n_i} u_i(k)}\frac{G_i(M-n_i)}{G(M)}
	\end{align}
	where $G_i(M-n_i)$ is the normalizing constant when node $i$ is removed and only $M-n_i$ vehicles remains in the system by \cite[p.128.]{lavenberg1983computer}. 
	For nodes $i \in S$, the special case when the node is a single server node (SS), we can compute the probability without computing $G_i(M-n_i)$  
	We define the relative utilization for single server node $i \in S$ as
	\begin{align*}
		\gamma_i=\frac{\lambda_i}{u_i(1)}=\frac{\lambda_i}{\alpha_i} 
	\end{align*}
	Then the probability of $n_i$ vehicles at departure point $i \in S$ can be simplified by \cite[p.128]{lavenberg1983computer} as follows. 
	\begin{align*}
		p_i(n_i)=\frac{\gamma_i^{n_i} [G(M-n_i)-\gamma_i G(M-n_i-1)]}{G(M)}
	\end{align*}
	The availability is defined as the stationary state probability that node $i$ has at least one vehicle and has the expression as follows. 
	\begin{align}
		A_i(M)=1-p_i(0)= \gamma_i \Lambda(M)=\frac{\lambda_i}{\alpha_i} \Lambda(M) 
	\end{align}
\end{enumerate}{}

\section{Proof of Lemma \ref{lemma:concave}}\label{app:concave}
As shown in \cite[Theorem 1]{shanthikumar1988second}, in a closed Jackson network (exponential service time), the system throughput $\Lambda(M)$, is nondecreasing concave with job population $M$, if the service rate $u_i(n_i)$ is nondecreasing concave with local queue length $n_i$, $\forall i \in \mathcal{N}$. 

We first consider the situation that the service time of the infinite server (IS) follows exponential distribution in our problem, then our network falls into the Jackson network. In our setting, $u_i(n_i)$ defined in \eqref{equ:u_i} is constant (Single Server), linear (Infinite Server) and nondecreasing concave (Finite Server) with respect to $n_i$, i.e., all the service rates satisfy the nondecreasing concavity condition. Therefore, the system throughput  $\Lambda(M)$ defined in  \eqref{equ:Lambda1} is also non-decreasing concave with $M$.

Secondly, if we change the service distribution of infinite server queues (IS) into a general distribution with the same mean, by Lemma \ref{lemma:insen}, the throughput $\Lambda(M)$ does not change and the non-decreasing concave property remains.

Finally, applying \eqref{equ:Lambda1}, we have
\begin{align*}
	\sum_{i\in I} z_i\Lambda_i(M)=\Lambda(M) \sum_{i\in I} z_i \lambda_i
\end{align*}
where both $z_i$ and $\lambda_i$ are independent of $M$, which means the first part of objective function is concave. 

As $g(M)$ is convex, then the second part of $f(M)$ is concave. As a conclusion, $f(M)$ is concave.

\section{Proof of Theorem \ref{thm:concavity}}\label{app:concavity}
Let us first prove the first statement when the travel time distributions along the infinite server nodes are exponential. We then invoke Lemma \ref{lemma:insen} to conclude the first statement. This immediately yields the other two assertions.
\begin{enumerate}
    \item Firstly, assume that all the service time are exponentially distributed in our problem. From \eqref{equ:u_i}, we find that for all queues $i \in \mathcal{N}$, the service time $u_i(n_i)$ is increasing concave with $n_i$. From \cite[Theorem 1]{shanthikumar1987optimal}, in such closed queueing network, if there is a finite server node $j \in F$ with $v_j$ servers, i.e. service rate $u_j(n_j)=u_j(1) min\{n_j,v_j\}$, then the system throughput function $\Lambda(V)$ is increasing concave with $v_j$, i.e. $\Lambda(V)+\Lambda(V+2 e_j) \leq 2\Lambda(V+e_j)$. 
    
    Next, we change the service time distribution of infinite-server queues (IS) from exponential distribution to a general distribution without changing the mean. From insensitivity property in Lemma \ref{lemma:insen}, $\Lambda(V)$ for the general distribution travel time case remains the same as that of the exponentially distributed travel time. Therefore, we conclude that $\Lambda(V)+\Lambda(V+2 e_j) \leq 2 \Lambda(V+e_j)$ holds for closed queueing network in our setting. 
    \item As a result of Part 1 above, the first part of $h(V)$ is increasing concave with $v_j$, as $\bar{Z}$ is independent from $V$. Now, as $-\sum_{j \in F}c_j v_j $ is linear with $v_j$ and the concavity is preserved under addition, the second part is concave with $v_j$. Moreover, $-\sum_{k \in S}\beta_k \alpha_k$ does not depend on $V$. Therefore, $h(V)$ is concave with $v_j$.
    \item We now prove the second statement. From the first part of proof above, we know that $\Lambda(V)$ is increasing with $v_j$, $\forall j \in F$. Since $A_i(V) = \frac{\lambda_i}{\alpha_i} \Lambda(V)$, we conclude that $A_i(V)$ is increasing concave function in $v_j$ for all $ i, j \in F$.
\end{enumerate}

\section{Proof of Theorem \ref{thm:v2}}\label{app:v2}
    We first establish the supermodularity property \cite[p. 43]{topkis1998supermodularity} of the objective function below.
    \begin{lemma}{\label{lem:super}}
    	If $|F|=2$, $V=(v_1,v_2)$, the objective function is supermodular, i.e., $h(v_1+1,v_2-1)+h(v_1,v_2) \leq h(v_1+1,v_2)+h(v_1,v_2-1) $.
    \end{lemma}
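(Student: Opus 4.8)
The plan is to strip off the parts of $h$ that do not interact across coordinates, reduce the remaining claim to a structural property of the throughput $\Lambda$, and then obtain that property from insensitivity together with known results on server allocation in closed Jackson networks. First I would use $h(V)=\Lambda(V)\bar Z-\sum_{j\in F}c_jv_j-\sum_{k\in S}\beta_k\alpha_k$ with $\bar Z>0$ a constant independent of $V$. The term $-\sum_j c_jv_j$ is additively separable in $v_1$ and $v_2$, hence has zero second mixed difference, and $-\sum_k\beta_k\alpha_k$ is constant, so
\begin{align*}
 & h(v_1+1,v_2)+h(v_1,v_2-1)-h(v_1+1,v_2-1)-h(v_1,v_2)\\
 &\quad =\bar Z\big[\Lambda(v_1+1,v_2)+\Lambda(v_1,v_2-1)-\Lambda(v_1+1,v_2-1)-\Lambda(v_1,v_2)\big].
\end{align*}
Thus Lemma \ref{lem:super} is equivalent to supermodularity of $V\mapsto\Lambda(V)$ on $\Na^2$, i.e. to the assertion that the marginal gain $\Lambda(v_1+1,v_2)-\Lambda(v_1,v_2)$ is nondecreasing in $v_2$.

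Next I would reduce to exponential travel times. Charging times are already exponential by the assumption in Subsection \ref{sub:fscharge}, and by Lemma \ref{lemma:insen} the stationary distribution \eqref{equ:p_n}, hence the normalizing constant of \eqref{equ:gm} and $\Lambda(V)=G(M-1,V)/G(M,V)$, is unchanged if each infinite-server travel time is replaced by an exponential one with the same mean. After this reduction the model is a closed Jackson network in which every node has a nondecreasing concave service-rate sequence (constant at the SS nodes, linear at the IS nodes, and $\min\{n,v_j\}/t$ at the two FS nodes, with a common per-server rate $1/t$), and the only parameters being varied are the server counts $v_1,v_2$ at the two FS nodes.

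I would then invoke the structural analysis of server allocation in closed networks of Shanthikumar and Yao \cite{shanthikumar1987optimal,shanthikumar1988server}: for a closed Jackson network all of whose nodes have nondecreasing concave rate sequences, the throughput, viewed as a function of the numbers of identical unit-rate servers at the multi-server stations, is nondecreasing, concave along each coordinate, and supermodular. Specializing to our two FS stations (all other stations held fixed) yields exactly the inequality above; the degenerate cases $M\le v_j$, where adding a server changes nothing, are covered by equalities. The one point requiring care in transcription is that \cite{shanthikumar1988server} phrases its result for allocating a fixed total budget of servers, whereas we impose only the box constraint $V\le\hat V$ of \eqref{equ:upperbound}; but supermodularity is a local lattice property of the function $\Lambda$ and is delivered directly by their argument, independently of any budget.

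Finally, the main obstacle is that supermodularity of $\Lambda$ does \emph{not} follow from the coordinate-wise concavity already established in Theorem \ref{thm:concavity}; it genuinely requires tracking how an extra charger at station $1$ interacts with congestion at station $2$. Carried out from scratch this amounts either to (i) an algebraic argument on the convolution representation $G(M,v_1,v_2)=\sum_{n_1+n_2\le M}\psi_1(n_1;v_1)\psi_2(n_2;v_2)H(M-n_1-n_2)$, where $\psi_j(n;v)$ is the per-node factor of \eqref{equ:gm} for an FS node with $v$ servers and $H$ is the (fixed) convolution of all other nodes, showing that $v\mapsto\psi_j(\cdot;v)$ perturbs the sequence only at occupancies above $v$ and in a way making the second mixed difference of $G(M-1,\cdot)/G(M,\cdot)$ nonnegative; or (ii) a sample-path coupling of the four continuous-time Markov chains associated with $(v_1,v_2)$, $(v_1+1,v_2)$, $(v_1,v_2+1)$, $(v_1+1,v_2+1)$ on a common probability space driven by shared Poisson clocks, proving a pathwise inequality for the cumulative numbers of service completions and letting $t\to\infty$. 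The coupling route (the one underlying \cite{shanthikumar1987optimal,shanthikumar1988server}) is the delicate step, since the four chains live on different state spaces and a monotone coupling must be built event by event; the identical-server assumption $t_i=t_j$ and the concavity of all rate sequences are precisely what make it go through.
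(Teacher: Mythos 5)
Your proposal is correct and follows essentially the same route as the paper's own proof: reduce supermodularity of $h$ to supermodularity of the throughput $\Lambda$ by observing that the cost terms are separable/constant and $\bar Z$ is a positive constant, obtain supermodularity of $\Lambda$ for exponential service times from Shanthikumar--Yao (the paper cites \cite[Lemma 6 (ii)]{shanthikumar1988server}), and extend to general travel-time distributions via the insensitivity property of Lemma \ref{lemma:insen}. Your additional remarks --- that supermodularity is not implied by the coordinate-wise concavity of Theorem \ref{thm:concavity} and that the cited result rests on a sample-path coupling --- are accurate elaborations rather than a departure from the paper's argument.
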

    \begin{proof}{}
    	From \cite[Lemma 6 (ii)]{shanthikumar1988server}, the throughput is supper modular in the case of exponential service time, i.e.
    	\begin{align*}
    		\Lambda(v_1+1,v_2-1)+ \Lambda(v_1,v_2)
    		\leq \Lambda(v_1+1,v_2)+\Lambda(v_1,v_2-1)
    	\end{align*}{}
    	From Lemma \ref{lemma:insen}, it also hold for any general travel time distribution. 
    	Multiplied it with the constant $\bar{Z}$, and minus $c_1(2v_1+1)+c_2(2v_2-1)+2 \sum_{k \in S}\beta_k \alpha_k$ on both sides of the inequality, we can conclude that the supermodular property also holds for function $h(v_1,v_2)$.
    \end{proof}

	We now prove Theorem \ref{thm:v2} by transforming the problem and then using induction. Let $|V| = v_1+v_2$. Consider the following transformed optimization problem:
	\begin{align*}
	&\max_{V \in \Na^2} h(V) \quad\text{ such that } V \leq \hat{V}, |V| = n.
    \end{align*}
	We exploit the supermodularity of $h$, proved in Lemma \ref{lem:super}, and use induction on $n$ to establish that for every $n$, Algorithm \ref{alg:charger} outputs the optimal solution for the new constrained optimization problem above. 
	
	If $|V^*|=3$, which means there is only one extra charger needed to be allocated on $V^1=(1,1)$. As the algorithm evaluates both $V=(2,1)$ and $V=(1,2)$ allocations and there is no other allocation, Algorithm \ref{alg:charger} achieves the optimal solution.  
	
	Suppose the algorithm generates the optimal solution when $|V^*|=n$, denoted as $(a,n-a)$. For any $b<a$, we have 
	\begin{align}{\label{inequ:assump}}
		h(b,n-b) \leq h(a,n-a)
	\end{align}
	
	In order to prove the result, we only need to show that, when $|V^*|=n+1$, the optimal solution is either $(a+1, n-a)$ or $(a, n+1-a)$, which are evaluated in the algorithm. It suffices to show that any other allocation not evaluated by the algorithm $(b,n+1-b)$ cannot achieve a higher profit than either $(a+1, n-a)$ or $(a, n+1-a)$. We show the case with $b<a$, and the other case with $b>a+1$ can be proved similarly.
	
	As $h(v_1,v_2)$ is concave in $v_2$ from Theorem \ref{thm:concavity}, we conclude
	\begin{align}{\label{inequ:conca}}
		h(b,n+1-b)+h(b,n-a) \leq h(b,n-b) +h (b,n+1-a)
	\end{align}{}
	since $n-a <n+1-a \leq n-b <n+1-b$.	From Lemma \ref{lem:super}, $h(v_1,v_2)$ is supermodular. This implies
	\begin{align}{\label{inequ:supermo}}
		h(b,n+1-a)+h(a,n-a) \leq h(a,n+1-a) + h(b,n-a)
	\end{align}{}
	since $b<a$ and $n-a<n+1-a$. Adding up equation \eqref{inequ:assump}, \eqref{inequ:conca}, and \eqref{inequ:supermo}, we have 
	\begin{align}
		h(b,n+1-b) \leq h(a,n+1-a)
	\end{align}{}
	which concludes the result for case $b<a$ and finishes the induction.
	
	Finally, we discuss the constraint $v_1 \leq \hat{v_1}$ and $v_2 \leq \hat{v_2}$ in \eqref{equ:upperbound}. (i) If neither of the constraint is active, then the global maximum $V^*$ found in Algorithm \ref{alg:charger} is the optimal solution to the optimization problem. (ii) If only one constraint is activated, say $v_1=\hat{v_1}$, then
	the optimization becomes univariate and Algorithm \ref{alg:charger} generates the optimal solution, since the iteration continues as long as $h(v_1,v_2+1) > h(v_1,v_2)$. (iii) If both are active, then the upper bound $\hat{V}$ becomes the solution, from the last line in Algorithm \ref{alg:charger}. 
\end{document}